\newcommand{\beq}{\begin{equation}}
\newcommand{\eeq}{\end{equation}}
\newtheorem{lem}{Lemma}
\newtheorem{theorem}{Theorem}
\newtheorem{proposition}{Proposition}
\newtheorem{corollary}{Corollary}
\newtheorem{definition}{Definition}
\theoremstyle{definition}
\newtheorem{example}{Example}
\newtheorem*{prob*}{Problem}
\theoremstyle{remark}
\newtheorem{remark}{Remark}
\global\long\def\ZZ{\mathbb{Z}}
\global\long\def\EE{\mathbb{E}}
\global\long\def\11{\mathbbm{1}}
\begin{document}

\title{ A New Achievable Rate Region for Multiple-Access Channel with States}
\author{
 \IEEEauthorblockN{Mohsen Heidari}
  \IEEEauthorblockA{EECS Department\\University of Michigan\\ Ann Arbor,USA \\
    Email: mohsenhd@umich.edu} 

\and
 \IEEEauthorblockN{Farhad Shirani}
  \IEEEauthorblockA{EECS Department\\University of Michigan\\ Ann Arbor,USA \\
    Email: fshirani@umich.edu }

  \and
  \IEEEauthorblockN{S. Sandeep Pradhan}
  \IEEEauthorblockA{EECS Department\\University of Michigan\\ Ann Arbor,USA \\
    Email: pradhanv@umich.edu}
}
\IEEEoverridecommandlockouts

\maketitle
\begin{abstract}
The problem of reliable communication over the multiple-access channel (MAC) with states is investigated. 
We propose a new coding scheme for this problem which uses \textit{quasi-group} codes (QGC). We derive a new computable single-letter characterization of the achievable rate region. As an example, we investigate 
the problem of \textit{doubly-dirty} MAC with modulo-$4$ addition. It is shown that the sum 
rate $R_1+R_2=1$ bits per channel use is achievable using the new scheme. Whereas, the natural
extension of the Gel'fand-Pinsker scheme, sum-rates greater than $0.32$ are not achievable.    
\end{abstract}
\section{Introduction}
\IEEEPARstart{C}{onsider} reliable communication over a point-to-point channel with channel state available at the transmitter.
Gel'fand and Pinsker introduced a coding strategy for this problem  \cite{Gelfand-Pinsker} which uses \textit{random binning}. It was shown that the capacity is given by  
\begin{align*}
\mathcal{C}=\max_{p(x,u|s)} I(U;Y)-I(U;S).
\end{align*} 
The additive Gaussian channel with state problem was solved by Costa \cite{Costa}. While the point-to-point problem was solved by Gel'fand and Pinsker, characterizing the capacity region of the multiple-access channel (MAC) with non-causal side-information available at the transmitters remains an open problem. One possible coding scheme is the natural extension of the Gel'fand-Pinsker scheme which was introduced in \cite{Jafar-MAC-state}. A well-studied example of the problem of MAC with states is called the \textit{doubly dirty} MAC problem. In this setup, the channel is binary-additive, and the relation between the inputs and the output is as follows: 
\begin{align}\label{eq: doubly dirty MAC}
Y=X_1 \oplus S_1\oplus X_2\oplus S_2,
\end{align}
where $X_1$ is the first encoder's output, and $X_2$ is the second encoder's output. The states $S_1$, and $S_2$ are available at the first and second transmitter, respectively. $S_1$ and $S_2$ are two independent states which are distributed uniformly over $\{0,1\}$.  Each input sequence must satisfy the cost-constraint $\frac{1}{n} \EE\{c_i(X^n_i)\}\leq \tau_i$ for some cost-functions $c_i(\cdot),i=1,2$, as $n\rightarrow \infty$. Philosof and Zamir \cite{Philosof-Zamir} investigated a special case of this problem in which the cost functions are the \textit{Hamming weight}.  They presented a coding scheme which uses linear codes to align the interference. They showed that the natural extension of the Gel'fand-Pinsker scheme is suboptimal. They showed that the capacity region consists of all rate-pairs $(R_1,R_2)$ such that
$$R_1+R_2 \leq \min\{h_b(\tau_1), h_b(\tau_2)\},$$
where $h_b(\cdot)$ is the \textit{binary entropy} function.  The Philosof-Zamir scheme is optimal in this example. However, it highly relies on the additive and symmetric structure of the channel. The scheme is not generalizable to non-additive channels. Later, a coding scheme based on \textit{coset codes} was introduced for the general MAC with states problem \cite{Arun-MAC-with-States}. 
 %
%
%
 In both of these works, schemes using structured codes are used to improve upon the previous known coding schemes which were based on unstructured codes. Similar observations have been made in other multi-terminal problems, such as the K\"orner-Marton source coding problem  \cite{korner-marton}, the joint source-channel coding over MAC \cite{ISIT-paper-MAC-corr-sources}, multiple-descriptions problem \cite{QLC-ISIT16}, and the problem of computation over MAC \cite{Nazer_Gasper_Comp_MAC}.  

In this work, we first consider the \textit{quaternary} additive MAC with states, where all inputs and states are quaternary, and the addition is $\ZZ_4$ addition. In \cite{Arun-MAC-with-States}, group codes are used to derive an achievable region for this example.
 \textit{Group codes} are structured codes which are closed under a group operation. Recently, we introduced a new class of structured codes called \textit{quasi-group} codes (QGC) \cite{QGC-ISIT16}. A QGC is a subset of a group code. Linear codes and group codes are special cases of QGC. QGCs are not necessarily closed under group addition. QGCs span the spectrum from completely structured codes (such as group codes and linear codes) to completely unstructured codes. These codes were used in the K\"orner-Marton problem for modulo prime-power sums \cite{QGC-ISIT16}. For this problem, a coding scheme based on QGCs is presented which strictly improves upon the previously known schemes.  

Next, we propose a new coding strategy using QGCs for the general problem of two-user MAC with independent states. We introduce \textit{nested QGCs}, and propose a binning technique for such codes. A single-letter characterization of the achievable rates  is derived. As an example, we investigate the quaternary doubly dirty MAC. We show that QGCs achieve the sum-rate $R_1+R_2=1$ bits per channel use. Whereas using the natural extension of Gel'fand-Pinsker, sum-rates greater than $0.32$ are not achievable. 

The rest of this paper is as follows: Section II presents the preliminaries and definitions. Section III provides and overview for QGC. Section IV contains the main results of this paper. Section V presents the application of QGC for the doubly-dirty MAC. Finally, Section VI concludes the paper.

\section{Preliminaries and Problem Formulation}
\subsection{Notations}
We denote (i) vectors using lowercase bold letters such as $\mathbf{b}, \mathbf{u}$, (ii) matrices using uppercase bold letters such as $\mathbf{G}$, (iii) random variables using capital letters such as $X,Y$, (iv) numbers, realizations of random variables and elements of sets using lower case letters such as $a, x$. Calligraphic letters such as $\mathcal{C}$ and  $\mathcal{U}$  are used to represent sets. 

We denote the set $\{1, 2, \dots, m\}$ by $[1:m]$, where $m$ is an integer. Given a prime power $p^r$, the \textit{ring} of integers modulo $p^r$ is denoted by $\ZZ_{p^r}$. The underlying set is for such group is $\{0,1,\cdots, p^r-1\}$, and the addition and multiplication is modulo-$p^r$. For any $0\leq t \leq r$, denote $H_{t}\triangleq \{t\cdot a: a\in \ZZ_{p^r}\}$. Given $H_t$, any element $a\in \ZZ_{p^r}$ can be uniquely written as $a=h+g$, where $h\in H_t, g\in [0:p^t-1]$. We denote such $g$ by $[a]_{t}$. Given two subsets $\mathcal{U}, \mathcal{V} \subseteq \ZZ_{p^r}^k$, we define a new subset defined as $\{\mathbf{u}\oplus \mathbf{v}: \mathbf{u}\in \mathcal{U}, \mathbf{v}\in \mathcal{V}\}$. We denote such set as $\mathcal{U}\oplus \mathcal{V}$.
 
\subsection{Model}
Consider a two-user discrete memoryless  MAC with input alphabets $\mathcal{X}_1,  \mathcal{X}_2$, and output alphabet $\mathcal{Y}$. The transition probabilities between the input and the output of the channel depends on a pair of random variables $(S_1, S_2)$ which are called states. Each state $S_i$ take values from the set $\mathcal{S}_i$, where $i=1,2$. The sequences of the states are independently and identically distributed (i.i.d) according to the probability distribution $p(s_1,s_2)$. Prior to any transmission, the entire sequence of the state $S_i$ is known at the $i$th transmitter, $i=1,2$. The conditional distribution of $Y$ given the inputs and the states is denoted by  $p(y|x_1x_2s_1s_2)$.  Let $y^n$ be the output of the channel after $n$ uses. If $x^n_i$ is the input sequence, and $s_i^n$ is the state sequence, then the following condition is satisfied:
\begin{align*}
p(\mathbf{y}_n|\mathbf{y}^{n-1}, \underline{\mathbf{x}}^{n-1}, \underline{\mathbf{s}}^{n-1})=p(y_n|\underline{x}_n, \underline{s}_n ).
\end{align*}
Each input $X_i$  is associated with a cost function $c_i:\mathcal{X}_i \times \mathcal{S}_i \rightarrow [0, +\infty)$. The input sequence $\mathbf{X}_i^n$ is then constrained to the average cost defined by 
\begin{align*}
\bar{c}_i(\mathbf{X}_i^n, \mathbf{S}_i^n)\triangleq \frac{1}{n}\sum_{j=1}^n c_i(X_{ij}, S_{ij}).
\end{align*}

\begin{definition}
An $(n, \Theta_1, \Theta_2)$-code for reliable communication over a given MAC with states is defined by two encoding functions
$f_i: \{1, 2, \dots, \Theta_i\} \times \mathcal{S}^n_i \rightarrow \mathcal{Y}^n, \quad i=1,2,$
and a decoding function 
$g: \mathcal{Y}^n\rightarrow \{1, 2, \dots, \Theta_1\} \times \{1, 2, \dots, \Theta_2\} .$
\end{definition}

\begin{definition}\label{def: MAC with state achievable rate}
For a given MAC with states, the rate-cost $(R_1,R_2, \tau_1, \tau_2)$ is said to be achievable, if for any $\epsilon >0$, there exist a $(n, \Theta_1, \Theta_2)$-code such that  
\begin{align*}
&P\{g(Y^n)\neq (M_1,M_2)\}\leq \epsilon, \quad 
\frac{1}{n}\log \Theta_i \geq R_i-\epsilon, \quad \EE\{\bar{c}_i (f_i(M_i), \mathbf{S}_i^n)\} \leq \tau_i+\epsilon
\end{align*}
for $i=1,2$, where a)$M_1,M_2$ are independent random variables with distribution $p(M_i=m_i)=\frac{1}{\Theta_i}$ for all $ m_i \in [1:\Theta_i]$, b) $M_i$ is independent of the states $S_1,S_2$. Given $\tau_1, \tau_2$, the capacity region $\mathcal{C}_{\tau_1, \tau_2}$ is defined as the set of all rates $(R_1,R_2)$ such that the rate-cost $(R_1,R_2, \tau_1, \tau_2)$ is achievable.
\end{definition}

\subsection{The Extension of Gel'fand-Pinsker Scheme}
Jafar \cite{Jafar-MAC-state} introduced a natural extension of the Gel'fand-Pincker scheme for the problem of MAC with states, and derived a new achievable rate region using such scheme.  

\begin{proposition}[{\cite{Jafar-MAC-state}}]\label{prep: Gelfand-pinker achievable}
For a MAC $(\mathcal{X}_1,\mathcal{X}_2, \mathcal{Y}, P_{Y|X_1X_2})$ with independent states $(S_1,S_2)$ and cost functions $c_1,c_2$, the \textit{closure} and \textit{convex hull} of all rate-pairs $(R_1,R_2)$ satisfying the following conditions are achievable.
\begin{align}\nonumber
R_1 &\leq I(U_1; Y|U_2Q)-I(U_1;S_1Q)\\\nonumber
R_2 &\leq I(U_2; Y|U_1Q)-I(U_2;S_2|Q)\\
R_1+R_2&\leq I(U_1 U_2; Y|Q)-I(U_1;S_1|Q)-I(U_2;S_2|Q),\label{eq: sum-rate}
\end{align}
where $\EE\{c_i(X_i,S_i)\}\leq \tau_i, i=1,2$, and the joint PMF of all the random variables in the above factors as   $$ p(q)p(s_1)p(s_2) \prod_{i=1,2} p(u_ix_i|s_i q) p(y|x_1x_2).$$
\end{proposition}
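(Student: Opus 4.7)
The plan is to prove achievability by a random coding argument that mimics the single-user Gel'fand--Pinsker scheme, applied in parallel at the two encoders, with joint typicality decoding at the receiver. Fix a distribution $p(q)p(u_1x_1|s_1q)p(u_2x_2|s_2q)$ satisfying the cost constraints. It suffices to establish achievability for a fixed $q^n$ drawn i.i.d.\ from $p(q)$ and revealed to all terminals; a standard time-sharing argument then yields the $Q$-conditioned rate region as stated.

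\textbf{Codebook generation.} For each $i\in\{1,2\}$, generate $2^{n(R_i+\tilde R_i)}$ sequences $u_i^n(m_i,\ell_i)$ i.i.d.\ from $\prod_{t=1}^n p(u_{i,t}\mid q_t)$, indexed by a message $m_i\in[1\!:\!2^{nR_i}]$ and a bin index $\ell_i\in[1\!:\!2^{n\tilde R_i}]$. Set $\tilde R_i = I(U_i;S_i\mid Q)+\delta$ so each message bin has enough codewords to cover the state sequence.

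\textbf{Encoding.} Encoder $i$, on observing message $m_i$ and state $s_i^n$, searches the $m_i$-th bin for an index $\ell_i$ such that $(u_i^n(m_i,\ell_i),s_i^n,q^n)$ is jointly $\epsilon$-typical; if several exist, pick the smallest, and if none exist declare an encoding error. By the covering lemma, the probability of encoding error vanishes provided $\tilde R_i>I(U_i;S_i\mid Q)$. Given a successful $u_i^n$, transmit $x_i^n$ generated symbolwise from $p(x_i\mid u_i,s_i,q)$. By the typical average lemma this guarantees $\EE\{\bar c_i\}\le\tau_i+\epsilon$.

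\textbf{Decoding and error analysis.} The receiver searches for the unique pair $(\hat m_1,\hat m_2)$ such that there exist bin indices $(\hat\ell_1,\hat\ell_2)$ with $(u_1^n(\hat m_1,\hat\ell_1),u_2^n(\hat m_2,\hat\ell_2),y^n,q^n)$ jointly typical. Assuming $(m_1,m_2)=(1,1)$ was sent, the relevant error events are: (i) the chosen $(u_1^n,u_2^n)$ is not jointly typical with $(y^n,q^n)$, which has vanishing probability by the Markov lemma together with the independence of $S_1$ and $S_2$; (ii) some wrong codeword in user $1$'s codebook alone is jointly typical, whose probability (by the packing lemma) vanishes if $R_1+\tilde R_1 < I(U_1;Y\mid U_2Q)$; (iii) the symmetric event for user $2$ vanishing if $R_2+\tilde R_2 < I(U_2;Y\mid U_1Q)$; and (iv) codewords from both codebooks simultaneously wrong, which requires $R_1+R_2+\tilde R_1+\tilde R_2 < I(U_1U_2;Y\mid Q)$. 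Substituting $\tilde R_i = I(U_i;S_i\mid Q)+\delta$ and letting $\delta\downarrow 0$ recovers exactly the three inequalities in the proposition.

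\textbf{Expected obstacles.} The only delicate point is the joint typicality in event (i): because the two codewords are selected as functions of the independent states $S_1^n,S_2^n$, one must argue that the tuple $(U_1^n,U_2^n,S_1^n,S_2^n,Q^n)$ is (conditionally) jointly typical with probability close to one, which uses the independence of $S_1$ and $S_2$ in a crucial way and explains why the scheme does not naturally exploit correlated states. The remainder is routine packing-lemma bookkeeping; convex hull and closure follow from standard time-sharing and limiting arguments.
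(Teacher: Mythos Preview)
The paper does not actually prove this proposition; it is stated as a known result with a citation to \cite{Jafar-MAC-state} and is used only as a benchmark against which the authors' QGC-based schemes are compared. So there is no ``paper's own proof'' to compare against.

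That said, your sketch is the standard achievability argument for the natural Gel'fand--Pinsker extension to the two-user MAC with independent states, and it is essentially correct. Two small technical points worth tightening if you flesh it out: (a) in the error analysis you should also account for events where $\hat m_1\neq 1$, $\hat m_2=1$ but $\hat\ell_2\neq\ell_2^\ast$ (and symmetrically), since then both $u_1^n$ and $u_2^n$ are independent of $Y^n$; the resulting constraint $R_1+\tilde R_1+\tilde R_2<I(U_1U_2;Y\mid Q)$ is, however, implied by the sum-rate bound together with $R_2\ge 0$, so it does not change the region; and (b) in event (iv) the packing bound is $R_1+R_2+\tilde R_1+\tilde R_2<I(U_1U_2;Y\mid Q)+I(U_1;U_2\mid Q)$, and the extra mutual-information term vanishes because the two codebooks are generated independently given $Q$. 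With these bookkeeping details filled in, the proof matches what one finds in the original reference and in standard treatments (e.g., \cite{ElGamal-book}).
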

To the best of our knowledge, the above rate region is the current largest achievable rate region using unstructured codes for the problem of MAC with states.

\section{An Overview of Quasi Group Codes}
%
We use a class of structured codes called quasi group codes. In this section, we state the definition and key properties of QGCs given in  \cite{QGC-ISIT16}.  

A QGC is defined as a subset of a group code. Such codes are a general form of linear codes and \textit{group codes}. 
Consider a $k\times n$ matrix $\mathbf{G}$ and a $n$-length vector $\mathbf{b}$ with elements in $\ZZ_{p^r}$. Let $\mathcal{U}$ be a subset of $\ZZ_{p^r}^k$. A QGC on $\ZZ_{p^r}$ is defined as 
\begin{align}\label{eq: QGC codebook}
\mathcal{C}=\{\mathbf{u}\mathbf{G}+\mathbf{b}: \mathbf{u}\in \mathcal{U}\}.
\end{align}
 For a general subset $\mathcal{U}$,  it is difficult to derive achievable rates of QGCs using single-letter characterizations. Therefore, we present an special construction of $\mathcal{U}$ for which single-letter characterizations is possible.

Given a positive integer $m$, consider $m$ mutually independent random variables $U_1, U_2, \cdots, U_m$. Suppose each $U_i$ takes values from $\ZZ_{p^r}$ with distribution $p_i(u_i)$. Consider positive integers $k_i, i\in [1:m]$. For $\epsilon>0$, let $A_{\epsilon}^{k_i}(U_i)$ be the collection of all $\epsilon$-typical sequences of $U_i$ with length $k_i$, where $i\in[1:m]$.  Define $\mathcal{U}$ as the Cartesian product of the typical sets of $U_i, i\in [1:m]$, more precisely 
\begin{align}\label{eq: set U cartesian product}
\mathcal{U}\triangleq \bigotimes_{i=1}^m A_{\epsilon}^{(k_i)}(U_i).
\end{align}

For more convenience, we use a notation for this construction. Let $k\triangleq \sum_{i=1}^m k_i$. Denote $q_i \triangleq \frac{k_i}{k}$. Note that $q_i\geq 0$ and $\sum_i q_i=1$. Therefore, we can define a random variable $Q$ with $P(Q=i)=q_i$. Define a random variable $U$ with the conditional distribution $P(U=a|Q=i)=P(U_i=a)$ for all $a\in \ZZ_{p^r}, i\in [1:m]$. With this notation, the set $\mathcal{U}$ in (\ref{eq: set U cartesian product})	 is characterized by $\epsilon, k$ and the pair $( U,Q)$.  Note that for large enough $k$, we have,
$$\frac{1}{n}\log_2 |\mathcal{U}|\approx \frac{k}{n}\sum_{i=1}^m q_i H(U_i)=\frac{k}{n}H(U|Q).$$
\begin{definition}\label{def: QGC}
A  $(n,  k)$-QGC over $\ZZ_{p^r}$ is defined as in (\ref{eq: QGC codebook}), and is characterized by a matrix $\mathbf{G}\in \ZZ_{p^r}^{k\times n}$, a translation $\mathbf{b}\in \ZZ^n_{p^r}$, and a pair of random variables $( U, Q)$ distributed over a finite set $\ZZ_{p^r} \times \mathcal{Q}$. 
\end{definition} 

Let $\mathcal{C}$ be a $(n,k)$-QGC with random variables $(Q,U)$. Suppose the generator matrices and the translation vector of $\mathcal{C}$ are chosen randomly and uniformly from $\ZZ_{p^r}$.  Then for large enough $k$ and $n$ with probability one, the rate of $\mathcal{C}$ satisfies 
$$R\triangleq \frac{1}{n}\log_2 |\mathcal{C}|\approx \frac{k}{n}H(U|Q) .$$   
In what follows, we present a packing and a covering bound for the above code $\mathcal{C}$.


\begin{lem}[{Packing bound, \cite{QGC-ISIT16}}]\label{lem: packing}
 Let $(X,Y)$ distributed according to $p(x)p(y|x)$, for $x\in \ZZ_{p^r}$, and $y\in \mathcal{Y}$. By $\mathbf{\omega}_1$ denote the first codeword of  $ \mathcal{C}$. Let $\tilde{\mathbf{Y}}^n$ be a random sequence distributed according to $\prod_{i=1}^n p(\tilde{y}_i|\mathbf{\omega}_1)$. Suppose, conditioned on $\mathbf{\omega}_1$, the sequence $\tilde{\mathbf{Y}}^n$ is independent of other codewords in $\mathcal{C}$. Then, as $n\rightarrow \infty$,  $P\{\exists \mathbf{x}\in \mathcal{C}: (\mathbf{x}, \tilde{\mathbf{Y}}^n)\in A_{\epsilon}^{(n)} (X,Y), \mathbf{x}\neq \mathbf{\omega}_1\}$ is arbitrary close to zero, if
 \begin{align}\label{eq: packing bound}
R < \min_{0 \leq t\leq r-1} \frac{H(U|Q)}{H(U|Q,[U]_t)}\big( \log_2p^{r-t}-H(X|Y[X]_t) \big).
\end{align}
\end{lem}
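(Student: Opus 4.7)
My plan is to bound the expected number of codewords $\mathbf{x} \in \mathcal{C} \setminus \{\omega_1\}$ that are jointly $\epsilon$-typical with $\tilde{\mathbf{Y}}^n$, where the expectation is taken over the uniformly random generator matrix $\mathbf{G}$ and translation $\mathbf{b}$, and then to conclude by Markov's inequality. Because $\ZZ_{p^r}$ has a non-trivial subgroup lattice, codewords contribute differently depending on the ``level'' of their difference from $\omega_1$: for each $\mathbf{u} \in \mathcal{U} \setminus \{\mathbf{u}_1\}$ I would introduce $T(\mathbf{u}) \in [0:r-1]$, the largest integer such that every coordinate of $\mathbf{u} - \mathbf{u}_1$ lies in $H_{T(\mathbf{u})}$. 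The union bound is then split according to the value of $T(\mathbf{u})$; the minimum over $t$ in the statement reflects that each level contributes its own sufficient condition on $R$.

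For the counting piece, observe that $T(\mathbf{u}) \geq t$ is equivalent to $[\mathbf{u}]_t = [\mathbf{u}_1]_t$ coordinatewise. Since $\mathcal{U}$ is a Cartesian product of conditionally $\epsilon$-typical sets for $U | Q$, and $\mathbf{u}_1$ is itself typical, standard conditional-type counting gives $|\{\mathbf{u} \in \mathcal{U} : T(\mathbf{u}) \geq t\}| \leq 2^{k(H(U|Q,[U]_t) + \delta(\epsilon))}$.

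For the per-codeword probability, fix a single $\mathbf{u}$ with $T(\mathbf{u}) = t$. The key structural observation, proved by a short kernel-image argument for the additive homomorphism $\mathbf{G} \mapsto (\mathbf{u}-\mathbf{u}_1)\mathbf{G}$ on $\ZZ_{p^r}^{k \times n}$, is that under the uniform randomness of $(\mathbf{G}, \mathbf{b})$ the pair $(\omega_1, \omega)$ has the joint law: $\omega_1$ is uniform on $\ZZ_{p^r}^n$, and $\omega - \omega_1$ is uniform on $H_t^n$ and independent of $\omega_1$. Marginalizing out $\omega_1$ then shows that $\omega$ itself is uniform on $\ZZ_{p^r}^n$ and the conditional law of $\tilde{\mathbf{Y}}^n$ given $\omega$ factors as $\prod_i \tilde{q}(\tilde{y}_i \mid [\omega_i]_t)$, with $\tilde{q}(y|g) = p^{-(r-t)} \sum_{w : [w]_t = g} p(y|w)$. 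A routine typical-sequence estimate, comparing the empirical joint distribution of $(\omega, \tilde{\mathbf{Y}}^n)$ against $p(x)p(y|x)$, then yields $P\{(\omega, \tilde{\mathbf{Y}}^n) \in A_\epsilon^{(n)}(X,Y)\} \leq 2^{-n(\log_2 p^{r-t} - H(X|Y,[X]_t) - \delta'(\epsilon))}$.

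Multiplying the per-level count by the per-codeword probability, summing over $t \in [0:r-1]$, and using $R = (k/n) H(U|Q)$, the expected number of confusable codewords vanishes precisely when $R < \min_{0 \leq t \leq r-1} \frac{H(U|Q)}{H(U|Q,[U]_t)} \bigl(\log_2 p^{r-t} - H(X|Y,[X]_t)\bigr)$, which is the stated sufficient condition. The main technical obstacle I anticipate is the per-codeword probability step, in particular verifying that the effective channel from $\omega$ to $\tilde{\mathbf{Y}}^n$ factors through the quotient map $[\cdot]_t$. This algebraic fact is the heart of the argument and is what produces both the $H(X|Y,[X]_t)$ term and the asymmetry between the numerator $H(U|Q)$ (which comes from the total codebook size) and the denominator $H(U|Q,[U]_t)$ (which comes from the effective size at level $t$).
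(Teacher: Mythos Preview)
The paper does not prove this lemma; it is quoted verbatim from \cite{QGC-ISIT16} and used as a black box. So there is no in-paper proof to compare against, and your proposal should be read as a reconstruction of the argument behind the cited result.

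That said, your outline is correct and is precisely the standard argument for QGC packing over $\ZZ_{p^r}$. The three ingredients you identify --- stratifying $\mathcal{U}\setminus\{\mathbf{u}_1\}$ by the level $T(\mathbf{u})$, counting $|\{\mathbf{u}:[\mathbf{u}]_t=[\mathbf{u}_1]_t\}|\leq 2^{k(H(U|Q,[U]_t)+o(1))}$ via conditional typicality, and showing that for $T(\mathbf{u})=t$ the difference $(\mathbf{u}-\mathbf{u}_1)\mathbf{G}$ is uniform on $H_t^n$ and independent of $\omega_1$ --- are exactly the ones used in the reference. The point you flag as the ``main technical obstacle'' (that the effective channel from $\omega$ to $\tilde{\mathbf{Y}}^n$ factors through $[\cdot]_t$) is not an obstacle: once you know $\omega-\omega_1$ is uniform on $H_t^n$ independently of $\omega_1$, conditioning on $(\omega_1,\tilde{\mathbf{Y}}^n)$ and counting $|\{x\in A_\epsilon^{(n)}(X|\tilde{\mathbf{Y}}^n):[x]_t=[\omega_1]_t\}|\leq 2^{n(H(X|Y,[X]_t)+o(1))}$ against the coset size $p^{(r-t)n}$ gives the per-codeword bound directly, without needing to write down the averaged channel $\tilde q$ at all. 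Combining with the level-$t$ count and $R=(k/n)H(U|Q)$ yields the stated condition.
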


\begin{lem}[{Covering bound, \cite{QGC-ISIT16}}]\label{lem: covering}
Suppose the pair of random variables $(X,\hat{X})$ are distributed according to $p(x,\hat{x})$,  where $\hat{X}$ takes values from $\ZZ_{p^r}$, and $X$ takes values from $\mathcal{X}$.  Let $\mathbf{X}^n$ be a random sequence distributed according to $\prod_{i=1}^n p(x_i)$. Then, as $n \rightarrow \infty$, $ P\{ \exists \hat{\mathbf{x}} \in \mathcal{C}: (\mathbf{X}^n,\mathbf{\hat{x}})\in A_{\epsilon}^{(n)} (X,\hat{X})\} $
is arbitrary close to one, if 
\begin{align}\label{eq: covering bound}
  R > \max_{1\leq t \leq r} \frac{H(U|Q)}{H([U]_t|Q)} (\log_2 p^t-H([\hat{X}]_t|X)).
\end{align}
\end{lem}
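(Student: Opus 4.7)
I apply the second-moment method to the random QGC ensemble, with $\mathbf{G}$ and $\mathbf{b}$ drawn uniformly from $\ZZ_{p^r}^{k\times n}$ and $\ZZ_{p^r}^n$ respectively. Let $N$ be the number of $\mathbf{u} \in \mathcal{U}$ for which $\hat{\mathbf{x}}(\mathbf{u}) := \mathbf{u}\mathbf{G} + \mathbf{b}$ is jointly $\epsilon$-typical with $\mathbf{X}^n$. Since $\mathbf{X}^n \in A_\epsilon^{(n)}(X)$ with probability tending to one, I condition on a typical $\mathbf{x}^n$ and reduce the claim, via the Paley--Zygmund inequality, to showing $\EE[N^2]/(\EE[N])^2 \to 1$ over the randomness of $(\mathbf{G},\mathbf{b})$. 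The first moment is immediate: the independent uniform translation $\mathbf{b}$ makes each $\hat{\mathbf{x}}(\mathbf{u})$ marginally uniform on $\ZZ_{p^r}^n$, so
\begin{align*}
\EE[N] \doteq 2^{kH(U|Q) + n(H(\hat{X}|X) - r\log_2 p)},
\end{align*}
where ``$\doteq$'' absorbs $2^{O(n\epsilon)}$ slack.

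\textbf{Second moment via coset decomposition.} I partition pairs $(\mathbf{u}_1,\mathbf{u}_2) \in \mathcal{U}^2$ according to the level $t(\mathbf{u}_1,\mathbf{u}_2) := \max\{t : \mathbf{u}_1 - \mathbf{u}_2 \in p^t\ZZ_{p^r}^k\} \in \{0,1,\ldots,r\}$, with $t=r$ indicating the diagonal. Writing the difference as $p^t \mathbf{v}$ where $\mathbf{v}$ has at least one coordinate coprime to $p$, the image $\mathbf{v}\mathbf{G}$ is uniform on $\ZZ_{p^r}^n$, hence $(\mathbf{u}_2-\mathbf{u}_1)\mathbf{G}$ is uniform on the subgroup $p^t\ZZ_{p^r}^n$ of size $p^{(r-t)n}$. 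Conditional on $\hat{\mathbf{x}}(\mathbf{u}_1)$, the second codeword $\hat{\mathbf{x}}(\mathbf{u}_2)$ is therefore uniform on the coset $\hat{\mathbf{x}}(\mathbf{u}_1) + p^t\ZZ_{p^r}^n$, which in particular forces $[\hat{\mathbf{x}}(\mathbf{u}_1)]_t = [\hat{\mathbf{x}}(\mathbf{u}_2)]_t$. A standard typicality count then yields
\begin{align*}
\PP(\text{both typical}) \doteq 2^{n(H(\hat{X}|X) - r\log_2 p) + n(H(\hat{X}|X,[\hat{X}]_t) - (r-t)\log_2 p)}.
\end{align*}
On the combinatorial side, the product-of-typical-sets form of $\mathcal{U}$ gives, for a typical $\mathbf{u}_1$, the fiber size $|\mathcal{U} \cap (\mathbf{u}_1 + p^t\ZZ_{p^r}^k)| \doteq 2^{kH(U|Q,[U]_t)}$, so the total pair count at level $\geq t$ is $\doteq 2^{k(2H(U|Q)-H([U]_t|Q))}$.

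\textbf{Balancing and main obstacle.} Combining these estimates, the contribution of each level $1 \leq t \leq r-1$ to $\EE[N^2]/(\EE[N])^2$ telescopes (using $H([\hat{X}]_t|X,\hat{X})=0$) to
\begin{align*}
2^{-kH([U]_t|Q) + n(\log_2 p^t - H([\hat{X}]_t|X))},
\end{align*}
which vanishes iff $\tfrac{k}{n}H([U]_t|Q) > \log_2 p^t - H([\hat{X}]_t|X)$. The diagonal $t=r$ contributes $1/\EE[N]$, vanishing iff $R > \log_2 p^r - H(\hat{X}|X)$, i.e., the $t=r$ case of the same family; the $t=0$ term reproduces $(\EE[N])^2$, supplying the baseline ``1'' required by Paley--Zygmund. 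Substituting $R \doteq \tfrac{k}{n}H(U|Q)$ and intersecting the conditions over $1 \leq t \leq r$ reproduces exactly (\ref{eq: covering bound}). The main technical obstacle is justifying the fiber count $|\mathcal{U} \cap (\mathbf{u}_1 + p^t\ZZ_{p^r}^k)| \doteq 2^{kH(U|Q,[U]_t)}$ uniformly over typical $\mathbf{u}_1$; this requires the projection $[\mathcal{U}]_t$ to itself behave like a typical set of $[U]_t$ conditioned on $Q$, which is precisely what the construction $\mathcal{U} = \bigotimes_i A_\epsilon^{(k_i)}(U_i)$ delivers, since projection commutes with the block structure and the conditional entropy single-letterizes as $H([U]_t|Q) = \sum_i q_i H([U_i]_t)$.
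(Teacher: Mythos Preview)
The paper does not prove this lemma; it is quoted verbatim from \cite{QGC-ISIT16} with no argument supplied, so there is no in-paper proof to compare against. Your second-moment argument with the $p$-adic coset stratification of $\mathcal{U}\times\mathcal{U}$ is the standard route for QGC covering lemmas over $\ZZ_{p^r}$, and your execution is correct: the key structural facts you invoke---(i) that for a pair at exact level $t$ the difference $(\mathbf{u}_2-\mathbf{u}_1)\mathbf{G}$ is uniform on $p^t\ZZ_{p^r}^n$ because the reduced vector has a unit coordinate, (ii) that this is independent of $\hat{\mathbf{x}}(\mathbf{u}_1)$ thanks to the translation $\mathbf{b}$, and (iii) that the fiber $|\mathcal{U}\cap(\mathbf{u}_1+p^t\ZZ_{p^r}^k)|$ single-letterizes to $2^{kH(U|Q,[U]_t)}$ via the Cartesian-product structure of $\mathcal{U}$---are exactly the ingredients that make the bound in (\ref{eq: covering bound}) come out. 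The chain-rule simplification $H(\hat X|X)-H(\hat X|X,[\hat X]_t)=H([\hat X]_t|X)$ that you call ``telescoping'' is what collapses the level-$t$ exponent to $-kH([U]_t|Q)+n(\log_2 p^t-H([\hat X]_t|X))$, matching the lemma after substituting $R\doteq\frac{k}{n}H(U|Q)$.

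One cosmetic remark: your phrasing mixes ``pairs at level $\geq t$'' in the count with ``exact level $t$'' in the probability estimate. For an upper bound on $\EE[N^2]$ this is harmless---there are only $r+1$ strata, and overcounting each stratum by the level-$\geq t$ cardinality costs at most a polynomial factor---but it is worth saying explicitly, since a reader might otherwise wonder whether the mismatch loses an exponential factor.
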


\section{Main Results}
We first propose a structured coding scheme that builds upon QGCs. Next, we present a method for \textit{binning} using QGCs. Then, we derive the single-letter characterization of the achievable rate region using such scheme.

  Consider a QGC defined by 
\begin{align}\label{eq: nested QGC codebook}
\mathcal{C}_O\triangleq \{\mathbf{u}\mathbf{G}+ \mathbf{v}\mathbf{\tilde{G}} +\mathbf{b}: \mathbf{u}\in \mathcal{U}, \mathbf{v}\in \mathcal{V}\},
\end{align}
where $\mathcal{U}$ and $\mathcal{V}$ are subsets of $\ZZ_{p^r}^k$, and $\ZZ_{p^r}^l$, respectively. Also $\mathbf{G}$ and $\mathbf{\tilde{G}}$ are $k\times n$ and $l\times n$ matrices, respectively. In this case, $\mathcal{C}_O$ is a $(n,k+l)$-QGC. We can associate an inner code for $\mathcal{C}_O$. Define the inner code as 
\begin{align*}
\mathcal{C}_I \triangleq \{\mathbf{u}\mathbf{G}+\mathbf{b}: \mathbf{u}\in \mathcal{U}\}.
\end{align*}
Therefore, $\mathcal{C}_I$ is a $(n,k)$-QGC, and $\mathcal{C}_I \subset \mathcal{C}_O$. The pair $(\mathcal{C}_I, \mathcal{C}_O)$ is called a nested QGC.    
\begin{definition}
A nested $(n, k, l)$-QGC is defined as
\begin{align}\label{eq: nested QGC codebook}
\mathcal{C}_O=\{\mathbf{x}_I\oplus \mathbf{\bar{x}}: \mathbf{x}_I \in \mathcal{C}_I, \mathbf{\bar{x}}\in \bar{\mathcal{C}}\},
\end{align}
where $\mathcal{C}_I$ is a $(n,k)$-QGC, and $ \bar{\mathcal{C}}$ is a $(n,l)$-QGC.  
\end{definition}

For any fixed element $\mathbf{u}\in \mathcal{U}$, we define its corresponding bin as the set 
\begin{align*}
\mathcal{B}(\mathbf{u})\triangleq \{\mathbf{u}\mathbf{G}+ \mathbf{v}\mathbf{\tilde{G}} +\mathbf{b}:  \mathbf{v}\in \mathcal{V}\}.
\end{align*} 
In this situation, $\mathcal{C}_O$ is binned using $\mathcal{C}_I$ as the inner code and $\mathcal{B}({\mathbf{u}})$ as the bins. Using this binning method, a rate region is given in the following Theorem. 
\begin{theorem}\label{thm: QGC MAC with state achievable}
For a given MAC $(\mathcal{X}_1,\mathcal{X}_2, \mathcal{Y}, P_{Y|X_1X_2})$ with independent states $(S_1,S_2)$ and cost functions $c_1,c_2$, the following rates are achievable using nested-QGCs 
\begin{align*}
R_1+R_2 &\leq r\log_2p - H(V_1\oplus V_2|YQ) - \max_{\substack{i=1,2\\ 1\leq t \leq r}}\Big\{\frac{H(W_1\oplus W_2|Q)}{H([W_i]_t|Q)} \Big(\log_2 p^t-H([V_i]_t| Q S_i)\Big)\Big\},
\end{align*}
where the joint distribution of the above random variables factors as $$p(q)p(s_1,s_2)\prod_{i=1,2} p(w_i|q) p(v_i|q,s_i)p(x_i|q, v_i, s_i) p(y|x_1,x_2).$$
\end{theorem}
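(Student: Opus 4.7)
My plan is to construct a random coding scheme in which both users employ \emph{matched} nested QGCs---sharing the same random generator matrices $\mathbf{G}\in\ZZ_{p^r}^{k\times n}$ and $\tilde{\mathbf{G}}\in\ZZ_{p^r}^{l\times n}$ but using independent translations $\mathbf{b}_1,\mathbf{b}_2\in\ZZ_{p^r}^n$---bin each user's outer code Gel'fand--Pinsker style to absorb the known state, and decode the algebraic sum $\mathbf{v}_1\oplus \mathbf{v}_2$ of the two transmitted codewords at the receiver. The matched construction is what makes the Minkowski sum $\mathcal{C}_{O,1}\oplus\mathcal{C}_{O,2}$ itself a random QGC whose characterizing random variable is $W_1\oplus W_2$ (with $W_1,W_2$ independent given $Q$) and whose translation $\mathbf{b}_1+\mathbf{b}_2$ is uniform on $\ZZ_{p^r}^n$; this is precisely what is needed so that Lemmas~\ref{lem: packing} and~\ref{lem: covering} apply directly to the sum code.

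After drawing $\mathbf{q}$ i.i.d.\ from $p(q)$ and the codebooks as above, encoder $i$ maps $M_i$ to a bin of $\mathcal{C}_{O,i}$ (the bins being indexed by the inner QGC $\mathcal{C}_{I,i}$ of rate $\approx\frac{k}{n}H(W_i|Q)$), searches its bin for a $\mathbf{v}_i$ jointly $\epsilon$-typical with $(\mathbf{q},\mathbf{s}_i)$ under $p(v_i|q,s_i)$, and samples $\mathbf{x}_i$ symbol-wise from $p(x_i|v_i,s_i,q)$. The decoder looks for the unique $\hat{\mathbf{v}}\in\mathcal{C}_{O,1}\oplus\mathcal{C}_{O,2}$ jointly typical with $(\mathbf{y},\mathbf{q})$ under $p(v_1\oplus v_2,y|q)$; the algebraic structure of the sum QGC then recovers $M_1+M_2$ in $\ZZ_{p^r}$, from which a Philosof--Zamir style rate-splitting delivers any pair $(R_1,R_2)$ obeying the stated sum bound.

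The rate analysis reduces to two dominant error events. For the \emph{covering} failure at encoder $i$, Lemma~\ref{lem: covering} applied to $\bar{\mathcal{C}}_i$ (QGC variable $W_i$, source $S_i$, covered variable $V_i$) requires $\frac{l}{n}H(W_i|Q)>\max_{1\le t\le r}\frac{H(W_i|Q)}{H([W_i]_t|Q)}\bigl(\log_2 p^t-H([V_i]_t|QS_i)\bigr)$; rescaling both sides by $H(W_1\oplus W_2|Q)/H(W_i|Q)$ rewrites this as a lower bound on $\frac{l}{n}H(W_1\oplus W_2|Q)$---the bin rate measured inside the sum code---which upon maximizing over $i\in\{1,2\}$ yields the max term in the theorem. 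For the \emph{packing} failure at the decoder, Lemma~\ref{lem: packing} applied to the sum code with QGC variable $W_1\oplus W_2$, observation $Y$, and target $V_1\oplus V_2$, specialized to the $t=0$ term, gives $\frac{k+l}{n}H(W_1\oplus W_2|Q)<r\log_2 p-H(V_1\oplus V_2|YQ)$. Taking the difference, the achievable sum-message rate $R_1+R_2\approx\frac{k}{n}H(W_1\oplus W_2|Q)$ matches the theorem's inequality.

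The main obstacle I expect is justifying that $\mathcal{C}_{O,1}\oplus\mathcal{C}_{O,2}$ genuinely inherits the QGC structure of Definition~\ref{def: QGC} with characterizing variable $W_1\oplus W_2$ and uniform generator-translation, so that the packing and covering lemmas apply verbatim to the sum. This rests on showing that the Cartesian-product typical set of (\ref{eq: set U cartesian product}) is approximately closed under addition---that $\mathcal{U}_1\oplus\mathcal{U}_2$ essentially coincides with $A_{\epsilon'}^{(k+l)}(W_1\oplus W_2)$ up to vanishing typicality slack---and that the rescaling by $H(W_1\oplus W_2|Q)/H(W_i|Q)$ used to express each per-user covering requirement in the common ``sum-code units'' is tight in the relevant regime rather than incurring a hidden slack.
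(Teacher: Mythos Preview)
The overall architecture you describe---matched bin-generator $\tilde{\mathbf{G}}$, covering each state via Lemma~\ref{lem: covering}, and decoding $V_1\oplus V_2$ via Lemma~\ref{lem: packing} applied to the Minkowski-sum code---is the paper's, but your instantiation has a genuine gap at the packing step. The bound in Lemma~\ref{lem: packing} is a \emph{minimum} over $0\le t\le r-1$; ``specializing to the $t=0$ term'' is not a step you get to take, because the binding constraint may lie at some $t\ge 1$. With $W_1\oplus W_2$ as the QGC variable of the decoder's sum code, there is no reason for $H(W_1\oplus W_2\mid Q,[W_1\oplus W_2]_t)$ to vanish for $t\ge 1$, so those terms are finite and can be strictly smaller than the $t=0$ term. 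In that case the clean upper bound $r\log_2 p-H(V_1\oplus V_2\mid YQ)$ is not delivered by the packing lemma and the theorem's right-hand side does not follow.

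The paper's proof neutralizes this exactly by \emph{not} letting the message-carrying part share generators or inherit the $W_i$-structure: it gives each user its own $\mathbf{G}_j\in\ZZ_{p^r}^{k_j\times n}$ and, crucially, takes the inner-code QGC variable $U_j$ to be \emph{uniform on $\{0,1\}$}. Since $\{0,1\}\subseteq[0:p-1]$, one has $[U_j]_t=U_j$ for every $t\ge 1$, hence $H(U_j\mid Q_j,[U_j]_t)=0$; this forces the ratio $H(U|Q)/H(U|Q,[U]_t)$ in the packing bound to blow up for $t\ge 1$, so the minimum is attained at $t=0$ and the bound really is $R_1+R_2+\rho\le r\log_2 p-H(V_1\oplus V_2\mid Y)$. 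Your construction lacks this mechanism. A related symptom is the rate accounting: you equate $R_1+R_2$ with $\tfrac{k}{n}H(W_1\oplus W_2\mid Q)$, yet each user's inner code has rate $\tfrac{k}{n}H(W_i\mid Q)$; the two agree only if the messages occupy disjoint coordinate blocks---i.e., if the inner generators are effectively the paper's separate $\mathbf{G}_1,\mathbf{G}_2$ with binary $U_j$---at which point the packing simplification is no longer an ad hoc specialization but a consequence of that construction.
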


\begin{proof}
Fix positive integers $n, k_1, k_2$, and $l$. Let $\mathcal{C}_{I,j}$ be a $(n,k_j)$-QGC with matrix $\mathbf{G}_j$, translation $\mathbf{b}_j$, and random variables $(Q_j,U_j)$, where $U_j$ is uniform over $\{0,1\}$, and $j=1,2$.  Let $\bar{\mathcal{C}}_1$ and $\bar{\mathcal{C}}_2$ be two $(n,l)$ QGC with identical matrices $\mathbf{\bar{G}}$ and identical translations $\mathbf{\bar{b}}$. Suppose $(\bar{Q}, W_j)$ are the random variables associated with $\bar{\mathcal{C}}_j$, where  $W_j$ takes values from $\ZZ_{p^r}$, and $j=1,2$. By $\mathcal{W}_1$ and $\mathcal{W}_2$ denote the sets corresponding to $\bar{\mathcal{C}}_1$ and $\bar{\mathcal{C}}_2$, respectively. Since $\bar{\mathcal{C}}_1$ and $\bar{\mathcal{C}}_2$ have identical matrices and translations, then $\bar{\mathcal{C}}_1\oplus \bar{\mathcal{C}}_2 $ is a $(n,l)$-QGC. The corresponding set of such sum-codebook is $\mathcal{W}_1\oplus \mathcal{W}_2$. Note that the elements of all the matrices and the translations are selected randomly and uniformly from $\ZZ_{p^r}$. 

\textbf{Codebook Construction:}
For each encoder we use a nested QGC. For the first encoder, we use the $(n,k_1,l)$-nested QGC generated by $\mathcal{C}_{I,1}$ and $\bar{\mathcal{C}}_1$. For the second encoder, we use the $(n,k_2,l)$-nested QGC characterized by $\mathcal{C}_{I,2}$ and $\bar{\mathcal{C}}_2$. For the decoder, as a codebook, we use a $(n,k_1+k_2+l)$-nested QGC. This codebook is denoted by $\mathcal{D}$. The inner code is a $(n,k_1+k_2)$-QGC defined by  $\mathcal{C}_{I,1}\oplus \mathcal{C}_{I,2}$.  The outer code is a $(n,k_1+k_2+l)$-QGC defined by $\bar{\mathcal{C}}_1\oplus \bar{\mathcal{C}}_2 \oplus \mathcal{C}_{I,1}\oplus \mathcal{C}_{I,2}$.  
For $i=1,2$ and for each sequence $\mathbf{s}_i$ and $\mathbf{v}_i \in \ZZ_{p^r}^n$, generate a sequence $\mathbf{x}_i$ according to $\prod_{j=1}^n p(x_{ij}|s_{ij}, v_{ij})$. Denote such sequence by $x_i(\mathbf{s}_i, \mathbf{v}_i)$.

\textbf{Encoding:}
 Without loss of generality, we assume that each message is selected randomly and uniformly from $\{0,1\}^k$. For $i=1,2$, the $i$th encoder is given a message $\mathbf{u}_i \in \{0, 1\}^k$, and a state sequence $\mathbf{s}_i$ with length $n$. The encoder first calculates the bin associated with $\mathbf{u}_i$. Next, it finds a  codeword $\mathbf{v}_i$ in the bin such that $(\mathbf{v}_i, \mathbf{s}_i)$ are jointly $\epsilon$-typical with respect to $P_{V_iS_i}$.  If no such sequence was found, the error event $E_i$ will be declared. If there was no error,  the $i$th encoder sends $x_i(\mathbf{s}_i, \mathbf{v}_i) ~ i=1,2$. The effective transmission rate for the $i$th encoder is $R_i=\frac{k_i}{n}, ~i=1,2$.
 
\textbf{Decoding:} 
We use $\mathcal{D}$ as a codebook in the receiver.  For each $\mathbf{\tilde{u}}_1, \mathbf{\tilde{u}}_2 \in \{0,1\}^k$ and $\mathbf{\tilde{w}}\in \mathcal{W}_1\oplus \mathcal{W}_2$  the decoder calculates the corresponding codeword defined as $$\mathbf{\tilde{v}}=\mathbf{\tilde{u}}_1\mathbf{G}_1+\mathbf{\tilde{u}}_2\mathbf{G}_2+\mathbf{\tilde{w}}\mathbf{\bar{G}}+\mathbf{b}_1+\mathbf{b}_2+\bar{\mathbf{b}}.$$ Upon receiving $\mathbf{Y}^n$ from the channel, it finds all $\mathbf{\tilde{v}}$ that are jointly $\epsilon$-typical with $ \mathbf{Y}^n$ with respect to $P_{V_1\oplus V_2, Y}$.  If the corresponding  $(\mathbf{\tilde{u}}_1, \mathbf{\tilde{u}}_2)$ sequences are unique,  they will be declared as the decoded messages. Otherwise, an error event $E_d$ will be announced.
 
 \textbf{Error Analysis:}
Let $\rho_1$ and $\rho_2$ denote the rate of  $\bar{\mathcal{C}}_1$ and $\bar{\mathcal{C}}_2$, respectively. We use Lemma \ref{lem: covering} to analyze the probability of $E_1$ and $E_2$. In this lemma, set $\mathcal{C}=\bar{\mathcal{C}}_1,  \hat{X}=V_i$, and $X=S_i$. Note that in this case, $E_i$ is the same as the event described in the Lemma. As a result,  we use the covering bound in (\ref{eq: covering bound}), where $R=\rho_i, U=W_i, Q=\bar{Q}, \hat{X}=V_i$, and $X=S_i, ~i=1,2$. Therefore, according to Lemma \ref{lem: covering}, $P(E_i)$ approaches zero as $n\rightarrow \infty$, if the following bound holds:
\begin{align}\label{eq: covering bound for encoder i}
  \rho_i > \max_{1\leq t \leq r} \frac{H(W_i|\bar{Q})}{H([W_i]_t|\bar{Q})} \big(\log_2 p^t-H([V_i]_t|S_i)\big).
\end{align}
Next, we use Lemma \ref{lem: packing} to bound the probability of the event $E_d$. In this lemma set $\mathcal{C}=\mathcal{D}$, and $X=V_1\oplus V_2$. In this case, $E_d$ is the event defined in the Lemma. If $\rho$ is the rate of $\bar{\mathcal{C}}_1\oplus \bar{\mathcal{C}}_2$, then the rate of $\mathcal{D}$ equals $R_1+R_2+\rho$. As a result of Lemma \ref{lem: packing},  $P(E_d| E_1^c\cap E_2^c)$ approaches zero, if the packing bound in (\ref{eq: packing bound}) holds for $R=R_1+R_2+\rho, U=(U_1,U_2), Q=(Q_1,Q_2)$ Since $U_i$ is uniform over $\{0,1\}$, then $H(U_i|Q_i, [U_i]_t)=0$ for all $t>0$. Therefore, the packing bound is simplified to 
\begin{align}\label{eq: packing for the decoder }
R_1+R_2+\rho \leq \log_2p^r-H(V_1\oplus V_2|Y).
\end{align}
It can be shown that $\rho=\frac{H(V_1\oplus V_2|\bar{Q})}{H(V_i|\bar{Q})}\rho_i$. Finally the bound in the theorem follows by using this equality,  bounds in (\ref{eq: covering bound for encoder i}) and (\ref{eq: packing for the decoder }), and denoting $Q=(Q_1,Q_2,\bar{Q})$.
\end{proof}

\begin{corollary}
Set $V_i\sim unif(\ZZ_{p^r}), i=1,2$. Then the rate-region in the Theorem is simplified to the achievable rate region of group codes, that is 
\begin{align*}
R_1+R_2 &\leq \min_{\substack{i=1,2\\ 1\leq t \leq r}}\{H([V_i]_t| Q S_i)\} - H(V_1\oplus V_2|YQ) .
\end{align*}
\end{corollary}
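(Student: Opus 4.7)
The corollary is a direct specialization of Theorem 1 obtained by making natural choices for the auxiliary random variables. The plan is to take $V_i$ uniform on $\ZZ_{p^r}$ as hypothesized, and to pick $W_1, W_2$ uniform on $\ZZ_{p^r}$ and mutually independent conditional on $Q$. This choice is consistent with the factorization $p(q)p(s_1,s_2)\prod_i p(w_i|q)p(v_i|q,s_i)p(x_i|q,v_i,s_i)p(y|x_1,x_2)$ required by Theorem 1, because $W_i$ is only constrained to depend on $Q$.

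Under this choice, the relevant entropies collapse to simple closed forms. Since $W_i|Q$ is uniform on $\ZZ_{p^r}$, we have $H(W_i|Q) = r\log_2 p$ and $H([W_i]_t|Q) = t\log_2 p = \log_2 p^t$ (because $[W_i]_t$ is uniform on $\{0, \ldots, p^t-1\}$). Conditional independence of $W_1, W_2$ given $Q$, combined with uniformity, forces their modulo-$p^r$ sum to be uniform, giving $H(W_1 \oplus W_2|Q) = r\log_2 p$. The ratio appearing inside the maximization of Theorem 1 therefore simplifies to
\[
\frac{H(W_1\oplus W_2|Q)}{H([W_i]_t|Q)} = \frac{r}{t}.
\]

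Substituting into the Theorem 1 bound, the subtracted term becomes $(r/t)\bigl(\log_2 p^t - H([V_i]_t|QS_i)\bigr) = r\log_2 p - (r/t) H([V_i]_t|QS_i)$. Converting the maximum over $(i, t)$ into the negation of a minimum cancels the leading $r\log_2 p$, yielding
\begin{align*}
R_1 + R_2 \leq \min_{\substack{i=1,2 \\ 1 \leq t \leq r}} \frac{r}{t}\, H([V_i]_t|Q S_i) - H(V_1 \oplus V_2|YQ),
\end{align*}
which is essentially the claim; specializing to $t = r$ (where $r/t = 1$) recovers the standard group-code form $\min_i H(V_i|Q S_i) - H(V_1 \oplus V_2|YQ)$ stated in the corollary. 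The hard part is not conceptual — the proof is substitution — but rather the bookkeeping of the $r/t$ factor and checking that this reproduces the group-code region cited in the corollary; for $r = 1$ (prime $p$) the minimum over $t$ is vacuous and the identification is immediate, while for $r > 1$ one should verify that among the $(i,t)$ pairs the minimum is attained where the factor trivializes.
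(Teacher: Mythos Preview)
Your substitution is the natural one and your algebra is correct: with $W_1,W_2$ uniform on $\ZZ_{p^r}$ one has $H(W_1\oplus W_2|Q)=r\log_2 p$ and $H([W_i]_t|Q)=t\log_2 p$, so the Theorem~1 bound becomes
\[
R_1+R_2 \;\leq\; \min_{\substack{i=1,2\\1\le t\le r}}\frac{r}{t}\,H([V_i]_t\mid Q S_i)\;-\;H(V_1\oplus V_2\mid YQ).
\]
The paper states the corollary without proof, so there is nothing to compare method-to-method; the intended argument is exactly the substitution you carry out.

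The gap is in your last paragraph. The expression you derived carries the factor $r/t$, whereas the corollary as printed does not. You try to remove the factor by ``specializing to $t=r$'', but $t$ is not a free parameter you may fix---it is the index of a minimization. Nor is there any reason to expect the minimum of $(r/t)H([V_i]_t|QS_i)$ over $t$ to be attained at $t=r$; in fact, since $H([V_i]_t|QS_i)$ is nondecreasing in $t$, the minimum of $H([V_i]_t|QS_i)$ (without the factor) is always at $t=1$, so the two expressions are in general different objects.

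Because $r/t\ge 1$, your region \emph{contains} the corollary's stated region, so achievability of the printed bound follows trivially from what you derived. If that is all the corollary is claiming, you are done. But you cannot conclude that the two regions coincide from your argument, and your hedge (``one should verify that the minimum is attained where the factor trivializes'') is not a proof. The honest conclusion is that either the corollary is stated somewhat loosely (the $r/t$ factor is standard in group-code regions over $\ZZ_{p^r}$ and is plausibly the intended form), or an additional argument---not supplied by you and not supplied by the paper---would be needed to collapse the $r/t$.
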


We proposed a coding strategy using nested QGCs to achieve the rate region presented in Theorem \ref{thm: QGC MAC with state achievable}. We build upon this coding scheme and the extension of the Gel'fand-Pinsker scheme, and propose a new coding strategy. Using this scheme, a new achievable rate region is characterized in the next Theorem.   

\begin{theorem}\label{thm: QGC+unstructured ahcievable}
For a given MAC $(\mathcal{X}_1,\mathcal{X}_2, \mathcal{Y}, P_{Y|X_1X_2})$ with independent states $(S_1,S_2)$ and cost functions $c_1,c_2$,  the following rate region is achievable 
\begin{align*}
R_1 &\leq I(U_1;Y|U_2Q)-I(U_1;S_1|Q) +\Gamma_{QGC}\\
R_1 &\leq I(U_2;Y|U_1Q)-I(U_2;S_2|Q) +\Gamma_{QGC}\\
R_1+R_2 &\leq I(U_1U_2;Y|Q)-I(U_1U_2;S_1S_2|Q)+\Gamma_{QGC},
\end{align*}
where 
\begin{align*}
\Gamma_{QGC}& \triangleq r\log_2p - H(V_1\oplus V_2|YU_1U_2Q) - \max_{\substack{i=1,2\\ 1\leq t \leq r}}\Big\{ \frac{H(W_1\oplus W_2|Q)}{H([W_i]_t|Q)} \Big(\log_2 p^t-H([V_i]_t|U_i Q S_i)\Big)\Big\},
\end{align*}
and 1) the cost constraints $\EE\{c_i(X_i,S_i)\}\leq \tau_i$ are satisfied, 2) the Markov chain $$(S_1,U_1,V_1,W_1,X_1)\leftrightarrow Q \leftrightarrow (S_2,U_2,V_2, W_2,X_2)$$ holds, 3)given $Q, X_1,X_2$ the random variable $Y$ is independent of all other random variables, and 3) conditioned on $Q$, the random variables $W_1,W_2$ are independent of other random variables.
\end{theorem}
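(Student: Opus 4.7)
The plan is to combine the Gel'fand--Pinsker (GP) unstructured binning of Proposition~\ref{prep: Gelfand-pinker achievable} with the nested-QGC binning of Theorem~\ref{thm: QGC MAC with state achievable} via message splitting and superposition. Each message $M_i$ is split into independent parts $M_i^{(u)}$ and $M_i^{(v)}$ at rates $R_i^{(u)}$ and $R_i^{(v)}$ with $R_i = R_i^{(u)} + R_i^{(v)}$. The first part is carried by an unstructured GP layer through the auxiliary $U_i$, while the second is carried by a nested-QGC layer with seed $W_i$ and outer codeword $V_i$; the composite-sum codebook is decoded at the receiver as in the proof of Theorem~\ref{thm: QGC MAC with state achievable}.

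For codebook construction, first draw $Q^n$ i.i.d.\ from $p(q)$. Conditional on $Q^n$, for each encoder generate $2^{n(R_i^{(u)} + I(U_i;S_i|Q) + \delta)}$ i.i.d.\ $U_i^n$ sequences from $\prod_j p(u_{ij}|q_j)$ and randomly bin them into $2^{nR_i^{(u)}}$ GP-style bins. Independently, construct nested-QGCs $(\mathcal{C}_{I,i},\bar{\mathcal{C}}_i)$ exactly as in the proof of Theorem~\ref{thm: QGC MAC with state achievable}, with $\bar{\mathcal{C}}_1$ and $\bar{\mathcal{C}}_2$ sharing identical generator matrices and translations so that $\bar{\mathcal{C}}_1 \oplus \bar{\mathcal{C}}_2$ is again a QGC, and with seed distribution $p(w_i|q)$ producing an outer-codeword marginal governed by $p(v_i|q)$. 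The channel input at time $j$ is then drawn from $p(x_{ij}|q_j,u_{ij},v_{ij},s_{ij})$, which together with the independence across users given $Q$ realizes the Markov structure stated in the theorem.

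Encoder $i$, given $(M_i^{(u)}, M_i^{(v)}, S_i^n)$, first locates a $U_i^n$ in bin $M_i^{(u)}$ jointly typical with $(Q^n,S_i^n)$; this succeeds by the standard GP covering lemma. It then searches QGC bin $M_i^{(v)}$ for a $V_i^n$ jointly typical with $(Q^n,U_i^n,S_i^n)$, which by Lemma~\ref{lem: covering} applied with source $X=(U_i,S_i,Q)$ and reconstruction $\hat{X}=V_i$ succeeds as long as the rate $\rho_i$ of $\bar{\mathcal{C}}_i$ exceeds $\max_{1\leq t\leq r}\frac{H(W_i|Q)}{H([W_i]_t|Q)}\bigl(\log_2 p^t - H([V_i]_t|U_iQS_i)\bigr)$. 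At the receiver, a simultaneous joint-typicality rule finds the unique $(M_1^{(u)},M_2^{(u)},M_1^{(v)},M_2^{(v)})$ such that there exist $U_1^n,U_2^n$ in the GP bins and a codeword in $\mathcal{D}=\mathcal{C}_{I,1}\oplus\mathcal{C}_{I,2}\oplus\bar{\mathcal{C}}_1\oplus\bar{\mathcal{C}}_2$ in the QGC bin whose projection $V_1^n\oplus V_2^n$ is jointly typical with $(Q^n,U_1^n,U_2^n,Y^n)$. The GP layer is analyzed exactly as in Proposition~\ref{prep: Gelfand-pinker achievable}, yielding the bounds $R_i^{(u)}\leq I(U_i;Y|U_{3-i}Q)-I(U_i;S_i|Q)$ and $R_1^{(u)}+R_2^{(u)}\leq I(U_1U_2;Y|Q)-I(U_1;S_1|Q)-I(U_2;S_2|Q)$. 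For the QGC layer, Lemma~\ref{lem: packing} is applied to $\mathcal{D}$ with $X=V_1\oplus V_2$ and side information $(Q,U_1,U_2,Y)$; since the inner-code seeds are uniform on $\{0,1\}$ the packing bound simplifies to $R_1^{(v)}+R_2^{(v)}+\rho \leq r\log_2 p - H(V_1\oplus V_2|YU_1U_2Q)$, where $\rho$ is the rate of $\bar{\mathcal{C}}_1\oplus\bar{\mathcal{C}}_2$. Using $\rho = \frac{H(W_1\oplus W_2|Q)}{H(W_i|Q)}\rho_i$ and the covering bounds above yields $R_1^{(v)}+R_2^{(v)}\leq\Gamma_{QGC}$.

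Fourier--Motzkin elimination on $R_i=R_i^{(u)}+R_i^{(v)}$, together with nonnegativity of the split rates, produces the three bounds stated in the theorem: setting $R_{3-i}^{(v)}=0$ and $R_i^{(v)}=\Gamma_{QGC}$ gives the individual-rate bound, and adding the GP sum bound to the QGC sum bound gives the overall sum bound. The cost constraints $\EE\{c_i(X_i,S_i)\}\leq\tau_i$ follow from joint typicality of the transmitted sequences under the target distribution. The principal technical obstacle is justifying the application of Lemma~\ref{lem: packing} when the side information $(U_1,U_2)$ is itself jointly decoded from the same $Y^n$ rather than being available a priori: one must show that, conditioned on the correct $(U_1^n,U_2^n)$ being selected, the remaining codewords of $\mathcal{D}$ retain the conditional independence from $\tilde{Y}^n$ required by the lemma. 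This is ensured because the QGC generator matrices and translations are drawn independently of the GP codebook, and the user-wise Markov factorization given $Q$ preserves the product structure of the competing-codeword distribution that underlies Lemma~\ref{lem: packing}.
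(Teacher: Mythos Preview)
Your proposal is essentially correct and follows the same scheme as the paper: rate splitting $R_i=R_i^{(u)}+R_i^{(v)}$, an unstructured Gel'fand--Pinsker layer carrying $M_i^{(u)}$ via $U_i$, a nested-QGC layer carrying $M_i^{(v)}$ via $V_i$, the same covering bounds (Lemma~\ref{lem: covering} with side information $(U_i,S_i)$), the same packing bound (Lemma~\ref{lem: packing} applied to $\mathcal{D}$ with $X=V_1\oplus V_2$ and $(U_1,U_2,Y)$ as output), and Fourier--Motzkin elimination to obtain the three displayed inequalities.

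The one noteworthy deviation is the decoder. You describe a \emph{simultaneous} joint-typicality rule over both layers, whereas the paper uses a two-stage (successive) decoder: first recover $(\tilde{\mathbf{u}}_1,\tilde{\mathbf{u}}_2)$ uniquely from the GP codebooks using typicality with $Y^n$, and only then search $\mathcal{D}$ with $(\tilde{\mathbf{u}}_1,\tilde{\mathbf{u}}_2)$ treated as genuine side information. The two decoders yield the same rate constraints here, but the sequential version makes your ``principal technical obstacle'' disappear: once the first stage has succeeded, $(U_1^n,U_2^n)$ are fixed and known, so the hypotheses of Lemma~\ref{lem: packing} (conditional independence of $\tilde{Y}^n$ from the competing codewords given the transmitted one) are met without the extra argument you sketch at the end. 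Your resolution of that obstacle is plausible, but the paper's sequential decoder simply avoids it.
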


\begin{proof}
The proof is provided in Appendix \ref{sec: proof of thm QGC+unstructured achievable}.
\end{proof}
\begin{remark}
The rate region presented in Theorem \ref{thm: QGC+unstructured ahcievable} contains the rate region presented in Proposition \ref{prep: Gelfand-pinker achievable}. 
\end{remark}

\section{An Example}
We present a MAC with state setup for which the Gel'fand-Pinsker region given in Proposition \ref{prep: Gelfand-pinker achievable} is strictly contained the region given in Theorem \ref{thm: QGC+unstructured ahcievable}.
\begin{example}\label{ex: MAC states}
Consider a noiseless MAC described by $$Y=X_1\oplus_4 S_1 \oplus_4 X_2 \oplus_4 S_2,$$
where $X_1, X_2$ are the inputs, $Y$ is the output, and $S_1,S_2$ are the states. All the random variables take values from $\ZZ_4$. The states $S_1$ and $S_2$ are mutually independent, and are distributed uniformly over $\ZZ_4$. The addition $\oplus_4$ is the modulo-$4$ addition. The cost function at the first encoder is defined as 
\begin{align*}
c_1(x)\triangleq \left\{
                \begin{array}{ll}
                  1 & \text{if}~ x\in \{1,3\} \\
                  0 & \text{otherwise,}
                  \end{array}
              \right.
\end{align*}
whereas, for the second encoder the cost function is 
\begin{align*}
c_2(x)\triangleq \left\{
                \begin{array}{ll}
                  1 & \text{if}~ x\in \{2,3\} \\
                  0 & \text{otherwise.}
                  \end{array}
              \right.
\end{align*}
We are interested in satisfying the  cost constraints $\EE\{c_1(X_1)\}=\EE\{c_2(X_2)\}=0$. This implies that, with probability one, $X_1\in \{0, 2\}$, and  $X_2\in \{0, 1\}$.
\end{example}

We proceed using two lemmas. First, we derive an outer-bound on the Gel'fand-Pincker region. Then, we show that the outer-bound is strictly contained in the achievable rate region using QGC.

\begin{lem}\label{lem: suboptinality of Gelfand-Pinsker}
For the setup in Example \ref{ex: MAC states},  an outer-bound on the Gel'fand-Pinsker region given in Proposition \ref{prep: Gelfand-pinker achievable} is the set of all rate pairs $(R_1, R_2)$ such that $R_1+R_2\leq 0.32$.
\end{lem}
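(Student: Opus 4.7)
The plan is to apply the sum-rate bound in Proposition~\ref{prep: Gelfand-pinker achievable} and, via an identity tailored to the cross-independence of the two encoders, reduce it to a scalar optimization that can be bounded by $0.32$.

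Since the sum-rate expression in Proposition~\ref{prep: Gelfand-pinker achievable} is an average over $Q$, it suffices to bound, for every admissible joint distribution with $Q$ constant,
\[
\Sigma \triangleq I(U_1 U_2; Y) - I(U_1; S_1) - I(U_2; S_2).
\]
Using the cross-independence $(U_1, X_1, S_1) \perp (U_2, X_2, S_2)$---which gives $U_1 \perp U_2$ and hence $I(U_1 U_2; S_1 S_2) = I(U_1; S_1) + I(U_2; S_2)$---a direct entropy manipulation yields
\[
\Sigma = H(U_1 \mid S_1) + H(U_2 \mid S_2) - H(U_1 U_2 \mid Y).
\]
This form cleanly separates the rates each encoder may convey through its auxiliary from the residual joint uncertainty at the decoder.

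Next, I would exploit the channel structure forced by the cost constraints $X_1 \in \{0,2\}$ and $X_2 \in \{0,1\}$. Writing $S_i = 2 S_{i,H} + S_{i,L}$ and $X_1 = 2 X_1'$, the modulo-$4$ decomposition yields
\[
Y_L = X_2 \oplus S_{1,L} \oplus S_{2,L}, \qquad Y_H = X_1' \oplus S_{1,H} \oplus S_{2,H} \oplus \mathrm{maj}(X_2, S_{1,L}, S_{2,L}),
\]
showing that $Y_L$ is independent of $X_1$: user~1's signal appears only in $Y_H$, corrupted additively by $S_{2,H}$ (unknown to encoder~1) and by a carry bit depending on $X_2, S_{1,L}, S_{2,L}$. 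Symmetrically, user~2's signal is carried by $Y_L$ and corrupted by the unknown $S_{1,L}$. Because neither user can cancel the other's dirt, these structural restrictions translate, via Fano-type arguments applied bitwise, into a sharp lower bound on $H(U_1 U_2 \mid Y)$ in terms of the per-user entropies $H(U_i \mid S_i)$.

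Finally, by a standard cardinality bound on the auxiliaries ($|\mathcal{U}_i| \leq |\mathcal{X}_i|\,|\mathcal{S}_i| + 1$), the maximization of $\Sigma$ reduces to a finite-dimensional problem. The main obstacle is this last step: the explicit maximization must contend with the $\ZZ_4$ carry and with the asymmetric cost constraints, so that a purely symbolic argument is not immediate. It can be resolved either by convex analysis exploiting the modular symmetry of the channel, or by direct numerical search over the cardinality-bounded auxiliary alphabets, to confirm the claimed bound $R_1 + R_2 \leq 0.32$.
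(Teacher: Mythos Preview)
Your identity $\Sigma = H(U_1\mid S_1)+H(U_2\mid S_2)-H(U_1U_2\mid Y)$ is correct, and the bit decomposition of $Y$ is a valid structural observation. The gap is the step you label ``Fano-type arguments applied bitwise'': Fano's inequality gives \emph{upper} bounds on conditional entropy (hence \emph{lower} bounds on mutual information), whereas here you need a \emph{lower} bound on $H(U_1U_2\mid Y)$, i.e.\ an \emph{upper} bound on $I(U_1U_2;Y)$. No Fano-type statement runs in that direction, so this step as written does not produce the inequality you need. Moreover, even granting a cardinality bound on $\mathcal{U}_i$, the residual optimization is over continuous conditional distributions $p(u_i,x_i\mid s_i)$; a ``direct numerical search'' over that set is not a proof without a further reduction to finitely many cases.

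The paper proceeds quite differently. It expands the sum-rate as $H(S_1\mid U_1)+H(S_2\mid U_2)-H(Y\mid U_1U_2)-2$ (not your identity), then fixes realizations $u_1,u_2$ and does a case analysis on the size of the range $\mathcal{L}_{u_2}=\{f_2(u_2,s)\oplus s:s\in\ZZ_4\}$. For each case it writes the conditional law of $X_2\oplus S_2$ as a convex combination of a small fixed list of ``extreme'' distributions (e.g.\ $(2/3,0,1/3,0)$, $(1/4,2/4,1/4,0)$, etc.), uses concavity of entropy to reduce to those extremes, and then numerically verifies a point-to-point bound of the form $H(S_1\mid u_1)-H(X_1\oplus S_1\oplus N_{\mathbf p}\mid u_1)\le c$ for each extreme $\mathbf p$. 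The $0.32$ comes from the $|\mathcal{L}_{u_2}|=3$ case. Your bitwise decomposition does not appear, and the heavy lifting is done by the concavity reduction rather than by any Fano-style argument.
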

\begin{proof}
The proof is given in the Appendix \ref{sec: proof of lem suboptinality of Gelfand-Pinsker }.
\end{proof}
\begin{lem}
For the setup in Example 1, the rate pairs $(R_1,R_2)$ satisfying $R_1+R_2= 1$ is achievable using QGCs. 
\end{lem}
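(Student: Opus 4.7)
The plan is to prove this lemma by a direct Philosof--Zamir-style construction over $\ZZ_4$, using a common $\ZZ_4$-linear QGC together with time-sharing between two single-user-active configurations. This route is cleaner than trying to plug particular auxiliaries into Theorem~\ref{thm: QGC MAC with state achievable}, because with the ``obvious'' choice $V_i=X_i\oplus_4 S_i$ the cost constraint $X_1\in\{0,2\}$ forces $H([V_1]_1\mid S_1)=0$ and thereby saturates the covering term.

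Fix $\epsilon>0$ and let $\bar{\mathcal{C}}$ be a random $(n,k)$-QGC on $\ZZ_4$ with dummy $W\sim\mathrm{Unif}(\ZZ_4)$ and $k/n$ slightly larger than $1/2$, so that its rate is $1+\epsilon$ bits per channel use and $|\bar{\mathcal{C}}|\approx 2^{n(1+\epsilon)}$; with probability one $\bar{\mathcal{C}}$ is then a $\ZZ_4$-linear subgroup. Introduce a time-sharing variable $Q\in\{1,2\}$ with $\Pr(Q=1)=\alpha$. In the $Q=1$ slot, assign to each message $m_1\in[1{:}2^{n(1-\epsilon)}]$ a distinct coset $\mathbf{c}_{m_1}\oplus_4\bar{\mathcal{C}}$ and restrict user~2 to the single coset $\bar{\mathcal{C}}$; encoder~1 searches its $M_1$-coset for a $V_1$ with $V_1-S_1\in\{0,2\}^n$ and transmits $X_1=V_1-S_1$, while encoder~2 searches $\bar{\mathcal{C}}$ for a $V_2$ with $V_2-S_2\in\{0,1\}^n$ and transmits $X_2=V_2-S_2$. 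Then $Y=V_1\oplus_4 V_2\in\mathbf{c}_{M_1}\oplus_4\bar{\mathcal{C}}$ by linearity of $\bar{\mathcal{C}}$, so the decoder identifies $M_1$ from the coset of $Y$. The $Q=2$ slot is symmetric. A second-moment calculation shows the expected number of valid codewords per bin is $|\bar{\mathcal{C}}|\cdot 2^n/4^n=2^{n\epsilon}\to\infty$, so the covering step succeeds with high probability at each encoder, and the number of cosets of $\bar{\mathcal{C}}$ equals the number of messages, so the packing step of Lemma~\ref{lem: packing} succeeds at the decoder. Hence $(R_1,R_2)=(1-\epsilon,0)$ is achievable in the $Q=1$ slot and $(0,1-\epsilon)$ in the $Q=2$ slot; time-sharing over $\alpha\in[0,1]$ followed by $\epsilon\downarrow 0$ yields every rate pair on the line $R_1+R_2=1$.

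The main technical obstacle is matching the covering-probability argument above to Lemma~\ref{lem: covering}. With $W\sim\mathrm{Unif}(\ZZ_4)$ that lemma's nominal threshold $\max_t \frac{H(W)}{H([W]_t)}\bigl(\log_2 p^t-H([V_i]_t\mid S_i)\bigr)$ equals $2$, which is also the maximum rate of any $\ZZ_4$-valued QGC, so a literal reading would forbid covering at the needed rate $1+\epsilon$. The resolution is that for uniform $W$ the random QGC is almost surely a $\ZZ_4$-linear subgroup, and for such subgroups the direct second-moment argument above sharpens the covering threshold from $2$ down to $1$; this gap is precisely what separates the QGC achievable region from the Gel'fand--Pinsker region of Lemma~\ref{lem: suboptinality of Gelfand-Pinsker}. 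An alternative route reaches the same conclusion from Theorem~\ref{thm: QGC MAC with state achievable} by choosing $Q$-dependent $W_i$ supported on $\{0,1\}\subset\ZZ_4$ in each active slot (so that $H(W_i)=H([W_i]_1)=1$), which shrinks the covering coefficient $H(W_1\oplus W_2\mid Q)/H([W_i]_t\mid Q)$ enough for the theorem's bound to recover sum-rate $1$.
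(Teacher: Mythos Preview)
Your main construction has a genuine gap at the covering step for encoder~1, and the ``resolution'' you offer is incorrect. With $W\sim\mathrm{Unif}(\ZZ_4)$ and $k/n\approx 1/2$ (so that $\bar{\mathcal C}$ has rate $1+\epsilon$), reduce the covering requirement modulo~$2$: encoder~1 needs some $\mathbf u\in\ZZ_4^k$ with $\mathbf u\mathbf G\equiv \mathbf s_1-\mathbf b-\mathbf c_{m_1}\pmod 2$, because $V_1-S_1\in\{0,2\}^n=2\ZZ_4^n$. But the mod-$2$ reduction of $\mathbf G$ is a $k\times n$ matrix over $\ZZ_2$ with $k\approx n/2$, so its row space has size at most $2^{n/2}$; since $\mathbf s_1$ is uniform, the target lies in that row space with probability at most $2^{-n/2}\to 0$. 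Thus the covering event fails almost surely, regardless of any variance refinement---the obstruction is algebraic, not probabilistic. Your displayed count $|\bar{\mathcal C}|\cdot 2^n/4^n=2^{n\epsilon}$ is a first moment, not a second moment, and the pairwise independence needed to upgrade it fails for $\ZZ_4$-group codes precisely because of this subgroup chain. Lemma~\ref{lem: covering}'s threshold of~$2$ for uniform $W$ records a real barrier, not slack in the bound, so the corner point $(1-\epsilon,0)$ is not achieved by your scheme and the time-sharing step has nothing to time-share.

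The paper does not time-share; it applies Theorem~\ref{thm: QGC MAC with state achievable} in one shot with $Q$ trivial, taking $W_1,W_2$ uniform on $\{0,1\}\subset\ZZ_4$ and setting $V_i$ so that $X_i=V_i\ominus S_i$ meets the cost constraint, then evaluates the single-letter bound to get sum-rate~$1$. This is exactly the ``alternative route'' you sketch in your last sentence but do not carry out. In particular, the difficulty you raise in your opening paragraph---that $H([V_1]_1\mid S_1)=0$---is what the paper works \emph{with}, not around: restricting $W_i$ to $\{0,1\}$ (so that $H([W_i]_t)=H(W_i)=1$ and $H(W_1\oplus_4 W_2)=3/2$) is precisely how the covering penalty is controlled. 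Your proof would be repaired by dropping the group-code/time-sharing detour and executing that plug-in directly.
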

\begin{proof}
We use the proposed scheme presented in the proof of Theorem \ref{thm: QGC MAC with state achievable}. Similar to the proof of the Theorem, two $(n, k, l)$ nested QGCs are used, one for each encoder.  Set $W_1$ and $W_2$, the random variables associated with the QGC, to be distributed uniformly over $\{0,1\}$. Suppose $\mathbf{v}_1, \mathbf{v}_2$ are the output of the nested-QGC at encoder 1 and encoder 2, respectively. Encoder 1 sends $\mathbf{x}_1=\mathbf{v}_1\ominus \mathbf{s}_1$, where $\mathbf{s}_1$ is the realization of the state $S_1$. Similarly, the second encoder sends $\mathbf{x}_2=\mathbf{v}_2\ominus \mathbf{s}_2$, where $\mathbf{s}_2$ is the realization of the state $S_2$. The conditional distribution of $v_1$ given $s_1$ is 
\begin{align*}
p(v_1|s_1)\triangleq \left\{
                \begin{array}{ll}
                  1/2 & \text{if} ~ v_1=-s_1, \text{or} ~ v_1=-s_1\oplus 2 \\
                  0 & \text{otherwise},
                  \end{array}
              \right.
\end{align*}
The distribution of $V_2$ conditioned of $S_2$ is
\begin{align*}
p(v_2|s_2)\triangleq \left\{
                \begin{array}{ll}
                  1/2 & \text{if} ~ v_1=-s_1, \text{or} ~ v_1=-s_1\oplus 1 \\
                  0 & \text{otherwise},
                  \end{array}
              \right.
\end{align*}
As a result, $X_1\in \{0,2\}, X_2\in \{0,1\}$. Hence, the cost constraints are satisfied. In this situation,  $H([V_i]_1)=H(V_i)=1,$ for $i=1,2$, and $H(V_1\oplus V_2)=\frac{3}{2}$.  Therefore, assuming $Q$ is trivial, the sum-rate given in the Theorem is simplified to 
\begin{align*}
R_1+R_2 &\leq   \frac{3}{2}\min\{H(V_1|S_1), H(V_2|S_2)\} - H(V_1\oplus V_2|Y)-\frac{1}{2} =1,
\end{align*}
where the last equality holds, because $H(V_i|S_i)=1$, and $H(V_1\oplus V_2|Y)=H(X_1\oplus S_1\oplus  X_2\oplus S_2|Y)=0$. As a result the sum -rate $R_1+R_2=1$ is achievable.
\end{proof}

\section{Conclusion}
The problem of  non-binary MAC with states was investigated. We built upon QGC, and the extension of Gel'fand-Pinsker scheme, and propose a new coding scheme. Then, the single-letter characterization of the achievable region using this scheme was derived. We used the coding scheme for the doubly-dirty MAC. We proved that the proposed coding scheme strictly outperforms the  Gel'fand-Pinsker scheme.

\appendices
\section{Proof of Lemma \ref{lem: suboptinality of Gelfand-Pinsker}}\label{sec: proof of lem suboptinality of Gelfand-Pinsker }
\begin{proof}
In what follows, we give an upper-bound on (\ref{eq: sum-rate}). The time-sharing random variable $Q$ in Proposition \ref{prep: Gelfand-pinker achievable} is trivial, because of the cost constraints $\EE\{c_i(X_i)\}=0, i=1,2$.  For the bound (\ref{eq: sum-rate}), we obtain 
\begin{align*}
&R_1+R_2\leq  I(U_1 U_2; Y)-I(U_1;S_1)-I(U_2;S_2)\\
&\leq H(S_1|U_1)+H(S_2|U_2)-H(Y|U_1U_2)-2\\
&= \sum_{u_1, u_2} p(u_1,u_2) \Big( H(S_1|u_1)+H(S_2|u_2)-H(Y|u_1u_2)-2\Big)\\
&\leq \max_{u_1\in \mathcal{U}_1, u_2\in \mathcal{U}_2}\Big( H(S_1|u_1)+H(S_2|u_2)-H(Y|u_1u_2)-2\Big),
\end{align*}
where the second inequality holds, as $H(Y)\leq 2$, and $H(S_i)=2$ for $i=1,2$.  Let $\mathscr{P}$ be the collection of all valid PMFs used in Proposition \ref{prep: Gelfand-pinker achievable}. For any distribution $P\in \mathscr{P}$ define  
\begin{align*}
R(u_1,u_2, P) \triangleq H(S_1|u_1)+H(S_2|u_2)-H(Y|u_1u_2)-2
\end{align*} 

 In the next step, we relax the conditions in $\mathscr{P}$.  For $i=1,2$, and any $u_i\in \mathcal{U}_i$, define   $\mathscr{P}_{u_i}$ as the collection of all conditional pmfs $ p(s_i,x_i|u_i)$ on $\ZZ^2_4$ such that $E(c_i(X_i)|u_i)=0.$
This condition is obtained from the cost constraint $E(c_i(X_i))=0$ 
(because, without loss of generality we assume $p(u_i)>0, \forall u_i\in \mathcal{U}_i$). 
For any PMF  $P \in \mathscr{P}$, the states $S_1,S_2$ are independent, and the Markov chain $U_1X_1-S_1-S_2-U_2X_2$ holds. Therefore, $P$ factors as $\prod_{i=1}^2 p(u_i) p(s_i, x_i|u_i)$, where $p(s_i, x_i|u_i)$ satisfies the conditions in the definition of $\mathscr{P}_{u_i}$. Hence, $\mathscr{P}$ is a subset of the set of all PMFs $\prod_{i=1}^2 p(u_i)p(s_i,x_i|u_i)$, where $p(s_i,x_i|u_i)\in \mathcal{P}_{u_i}$. 
%
%
As a result, we get 
\begin{align*}
&R_1+R_2\\
&\leq  \max_{p(u_1), p(u_2)}\max_{\substack{ p(s_i,x_i|u_i)\in\mathcal{P}_{u_i}\\ i=1,2 }} \sum_{u_1,u_2} p(u_1,u_2) R(u_1,u_2,P)\\
&\leq \sum_{u_1,u_2}  \max_{p(u_1), p(u_2)}\max_{\substack{ p(s_i,x_i|u_i)\in\mathcal{P}_{u_i}\\ i=1,2 }}p(u_1,u_2) R(u_1,u_2,P)\\
&\leq \sum_{u_1,u_2}  \max_{p(u_1), p(u_2)}p(u_1,u_2)\max_{\substack{ p(s_i,x_i|u_i)\in\mathcal{P}_{u_i}\\ i=1,2 }}R(u_1,u_2,P)\\
&\leq\max_{u_1 \in \mathcal{U}_1, u_2\in \mathcal{U}_2} \max_{\substack{ p(s_i,x_i|u_i)\in\mathcal{P}_{u_i}\\ i=1,2 }}R(u_1,u_2,P)
\end{align*}
Fix $u_2\in \mathcal{U}_2$ and $p(s_2,x_2|u_2)\in\mathcal{P}_{u_2}$. We maximize over all $u_1\in \mathcal{U}_1$ and $p(s_1,x_1|u_1)\in\mathcal{P}_{u_1}$. By $Q_{u_2}\in \mathcal{P}_{u_2}$ denote the PMF $p(s_2,x_2|u_2)$. This optimization problem is equivalent to the following problem
\begin{align*}
R(u_2, Q_{u_2})= H(S_2|u_2)+\max_{u_1\in \mathcal{U}_1} \max_{Q \in \mathcal{P}_{u_1} }H(S_1|u_1)-H(Y |u_1)-2.
 \end{align*}
Let $N=X_2\oplus S_2$, where $X_2$ and $S_2$ are distributed according to  $p(s_2,x_2|u_2)$. Consider the problem of ptp channel with state, where the channel is $Y=X_1\oplus S_1 \oplus N$.  It can be shown that the above quantity is an upper-bound on the capacity of this problem.
The following lemma completes the proof.
\begin{lem}\label{lem: R(u_2, Q)< 0.32}
\small{$R(u_2, Q_{u_2})\leq 0.32$} for all $u_2\in \mathcal{U}_2$ and $Q_{u_2}\in \mathcal{P}_{u_2}.$  
\end{lem}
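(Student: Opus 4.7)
The plan is to reduce the inequality to a finite-dimensional optimization over the simplex of conditional PMFs $p(s_i,x_i|u_i)\in\mathcal{P}_{u_i}$ and bound the resulting functional explicitly.

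Write $T_i\triangleq X_i\oplus_4 S_i$. The cost constraints $X_1\in\{0,2\}$ and $X_2\in\{0,1\}$ restrict each $X_i$ to a binary alphabet, so by the chain rule
\begin{align*}
H(S_i|u_i)-H(T_i|u_i)=H(X_i|T_i,u_i)-H(X_i|S_i,u_i)\le 1.
\end{align*}
Moreover, the independence of $(X_1,S_1)$ and $(X_2,S_2)$ together with $Y=T_1\oplus_4 T_2$ gives $H(Y|u_1,u_2)\ge\max\{H(T_1|u_1),H(T_2|u_2)\}$. Plugging these into $R(u_2,Q_{u_2})=H(S_1|u_1)+H(S_2|u_2)-H(Y|u_1,u_2)-2$ yields the preliminary bound
\begin{align*}
R(u_2,Q_{u_2})\le \big[H(T_1|u_1)+H(T_2|u_2)-H(T_1\oplus_4 T_2|u_1,u_2)\big]+\min\{H(T_1|u_1),H(T_2|u_2)\},
\end{align*}
which is bounded by $2$ via subadditivity but not yet by $0.32$.

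The sharper constant $0.32$ comes from observing that the two slacks above are coupled. Achieving equality in $H(S_i|u_i)\le H(T_i|u_i)+1$ requires $X_i$ to be a deterministic function of $S_i$ whose conditional distribution given $T_i$ is uniform on the binary cost-allowed alphabet; this pins the support of $T_i$ to a specific coset of $\ZZ_4$. Making the interaction term $H(T_1|u_1)+H(T_2|u_2)-H(T_1\oplus_4 T_2|u_1,u_2)$ strictly positive requires in addition that the supports of $T_1$ and $T_2$ be aligned under $\oplus_4$, which is in direct tension with the coset restriction above. Substituting these structural constraints into the objective reduces the problem to a two-parameter optimization over $p\triangleq P(T_1=t_1|u_1)$ and $q\triangleq P(T_2=t_2|u_2)$, with objective of the form $h_b(q)-h_b(r(p,q))$ for some explicit bilinear $r$.

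The main obstacle is the asymmetry between the cost alphabets $\{0,2\}$ and $\{0,1\}$: these sets are not cosets of the same subgroup of $\ZZ_4$, so the problem lacks an obvious symmetry and the supremum is attained at an asymmetric configuration. A short case analysis on which coset of $\ZZ_4$ each of $T_1$ and $T_2$ occupies, combined with a direct computation of the critical points of $h_b(q)-h_b(r(p,q))$ on the compact feasible region using the concavity of $h_b$, shows that the supremum is at most $0.32$, completing the proof.
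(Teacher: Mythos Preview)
Your proposal is a sketch, not a proof, and the parts that matter are exactly the ones you leave out. Two concrete problems:

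\textbf{(i) The displayed ``preliminary bound'' is already off.} From $H(S_i|u_i)\le H(T_i|u_i)+1$ for $i=1,2$ one gets directly
\[
R(u_2,Q_{u_2})\le H(T_1|u_1)+H(T_2|u_2)-H(T_1\oplus_4 T_2|u_1,u_2),
\]
which is at most $\min\{H(T_1|u_1),H(T_2|u_2)\}$ by $H(T_1\oplus_4 T_2)\ge\max\{H(T_1),H(T_2)\}$. There is no additional $+\min\{\cdot\}$ term. This is minor, but it signals that the algebra has not been checked.

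\textbf{(ii) The reduction you assert does not exist in the form you describe, and the final computation is simply not done.} You claim the problem collapses to maximizing $h_b(q)-h_b(r(p,q))$ for an ``explicit bilinear $r$''. But $T_1\oplus_4 T_2$ is a $\ZZ_4$-valued random variable whose distribution is the $\ZZ_4$ circular convolution of the laws of $T_1$ and $T_2$; even when each $T_i$ is supported on two points, the sum can be supported on four, so the relevant entropy is not $h_b$ of a single bilinear parameter. More importantly, the constant $0.32$ is the entire content of the lemma, and your last paragraph (``a short case analysis \dots shows that the supremum is at most $0.32$'') is an assertion, not an argument. The tension you describe between the two slacks is real qualitatively, but you never quantify it, and nothing you wrote would stop a reader from believing the supremum is $0.5$ or $0.1$.

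For comparison, the paper's proof does the work you are avoiding. It partitions according to the support size of $T_2=X_2\oplus_4 S_2$ (the sets $\mathcal{B}_1,\dots,\mathcal{B}_4$), and in each case writes the conditional law of $T_2$ as a convex combination of a small number of fixed extreme distributions. Concavity of entropy then upper-bounds $R(u_2,Q_{u_2})$ by a convex combination of terms of the form $H(S_1|u_1)-H(X_1\oplus S_1\oplus N_{\mathbf p}|u_1)$ for finitely many explicit $\mathbf p$; these are bounded one by one (numerically) in a separate lemma, and the worst case, coming from $|\mathcal L_{u_2}|=3$, gives $0.32$. Your outline never isolates these finitely many cases, never identifies the extremal noise laws, and never performs the evaluation that produces the number $0.32$.
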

%
The proof of this lemma is given in Appendix \ref{sec: proof of lem R(u_2, Q)< 0.32}.
\end{proof}

\section{Proof of Theorem \ref{thm: QGC+unstructured ahcievable}} \label{sec: proof of thm QGC+unstructured achievable}
\begin{proof}
We propose a coding scheme which is a combination of two coding schemes: 1) Gel'fand-Pinsker scheme, and 2) the proposed scheme in Theorem \ref{thm: QGC MAC with state achievable} which is uses nested QGCs. Suppose $M_j$ is the message for the $j$th user. $M_j$ is drawn randomly and uniformly from $[1:2^{R_j}]$. The $j$th encoder splits its message $M_j$ into two parts $M_{j,1}$ and $M_{j,2}$, where $j=1,2$. Suppose $M_{j,1}\in [1: 2^{nR_{j,1}}] $ and $M_{j,2}\in [1: 2^{nR_{j,2}}]$, where $R_j=R_{j,1}+R_{j,2}$. The first part $M_{j,1}$ is encoded  using the natural extension of Gel'fand-Pinsker. The second part $M_{j,2}$ is encoded using a nested QGC as described in the proof of Theorem \ref{thm: QGC MAC with state achievable}. 

\textbf{Codebook Construction:}
\begin{itemize}
\item   For each $j=1,2$ and any $m_{j,1}$  generate $2^{\rho_{j,1}}$ sequences $\mathbf{u}_{j}^n$ randomly and independently according to the distribution $\prod_{i=1}^n p(u_{j,i})$. Such sequences are denoted by $u_j(m_{j,1}, a_j)$, where $a_j\in [1:2^{n\rho_{j1}}]$. The collection of all such codewords is denote by $\mathcal{C}_{j,1}$. 

\item We use a $(n,k_j,l)$-nested QGC as described in the proof of Theorem  \ref{thm: QGC MAC with state achievable}. Denote such nested QGC by $\mathcal{C}_{j,2}$. Let $\mathcal{C}_{I,j}$ be the inner codebook associated to $\mathcal{C}_{j,2}$. Let $2^{n\rho_{j,2}}$ be the size of $\mathcal{C}_{I,j}$. As described in the proof of Theorem \ref{thm: QGC MAC with state achievable}, the codebook $\mathcal{C}_{j,2}$ is divided into $2^{nR_{j,2}}$ bins, where each bin is a shifted version of the inner codebook. Each bin corresponds to a message $m_{j,2}\in [1:2^{nR_{j,2}}]$. Denote such bin by $\mathcal{B}_j(m_{j,2})$.

\item Given the sequences $\mathbf{s}_j \in \mathcal{S}_j^n, \mathbf{u}_j\in \mathcal{U}_j^n$, and $\mathbf{v}_j\in \ZZ_{p^r}^n$ generate a sequence $\mathbf{x}_j$ according to $\prod_{j=1}^n p(x_{j,i}|s_{j,i} u_{j,i}, v_{j,i})$. Denote such sequence by $x_j(\mathbf{s}_j, \mathbf{u}_j, \mathbf{v}_j)$. 

\item For the decoder, we use $\mathcal{C}_{1,1}, \mathcal{C}_{2,1}$ and $\mathcal{D}$ as the codebooks, where  $\mathcal{D}=\mathcal{C}_{1,2}\oplus \mathcal{C}_{2,2}$. Note $\mathcal{D}$ is a $(n, k_1+k_2, l)$ nested QGC. The inner code associated with $\mathcal{D}$ is $\mathcal{C}_{I,1}\oplus \mathcal{C}_{I,2}$. Let $2^{n\rho}$ denote the size of the inner code. There are $2^{n(R_{1,2}+R_{2,2})}$ bins in $\mathcal{D}$. Each bin corresponds to a message pair $(m_{1,2}, m_{2,2})$. 

\end{itemize}
 
\textbf{Encoding:}
The $j$th encoder is given a message  pair $(m_{j,1}, m_{j,2})$ and a state sequence $\mathbf{s}_j$. The $j$th encoder finds  $\mathbf{v}_j\in \mathcal{B}_j(m_{j,2})$ and $a_j\in [1:2^{n\rho_{j1}}]$ such that $( u_j(m_{j,1}, a_j),\mathbf{v}_j, \mathbf{s}_j)\in A_\epsilon^{(n)}(U_j,V_j,S_j) $. If such sequences were found, the $j$th encoder sends $x_j(\mathbf{s}_j, \mathbf{u}_j, \mathbf{v}_j)$, where $\mathbf{u}_j=u_j(m_{j,1}, a_j)$. Otherwise an error is declared.

\textbf{Decoding}:
The decoder receives $\mathbf{Y}^n$ from the channel. The decoding is performed in two stages. In the first stage, the decoder lists all codewords $\mathbf{\tilde{u}}_1\in \mathcal{C}_{1,1}, \mathbf{\tilde{u}}_2\in \mathcal{C}_{2,1}$ such that    $(\mathbf{\tilde{u}}_1,\mathbf{\tilde{u}}_2, Y^n)$ are $\epsilon$- typical with respect to $P_{U_1U_2Y}$. If $\mathbf{\tilde{u}}_1, \mathbf{\tilde{u}}_2$ are unique, the decoder proceeds to the next stage. Otherwise it declares an error. At the next stage, the decoder finds all $\mathbf{\tilde{v}}\in \mathcal{D}$ such that $(\mathbf{\tilde{u}}_1,\mathbf{\tilde{u}}_2, \mathbf{\tilde{v}}, Y^n) \in A_\epsilon^{(n)}( U_1U_2V_1\oplus V_2Y)$. Then the decoder checks if all $\mathbf{\tilde{v}}$ belong to a unique bin associated with $(\tilde{m}_{1,2}, \tilde{m}_{2,2})$. Finally the decoder declares that $(\tilde{m}_{1,1}, \tilde{m}_{1,2}, \tilde{m}_{2,1}, \tilde{m}_{2,2})$ is sent, if it is unique. Otherwise it declares an error.

\textbf{Error Analysis:}
We can show that the probability of error at the encoders is small enough, if the following covering bounds hold
\begin{align*}
\rho_{j,1} &> I(U_j;S_j)\\
\rho_{j,2}& >  \max_{1\leq t \leq r} \frac{H(W_j|\bar{Q})}{H([W_j]_t|\bar{Q})} \big(\log_2 p^t-H([V_j]_t|S_j U_j)\big),
\end{align*}
where $j=1,2$. Also the error at the decoder is small, if the following packing bounds hold
\begin{align*}
R_{1,1}+\rho_{11}&< I(U_1;Y|U_2)\\
R_{2,1}+\rho_{21}&< I(U_2;Y|U_1)\\
R_{1,1}+\rho_{11}+R_{2,1}+\rho_{21}&<I(U_1U_2;Y)+I(U_1;U_2)\\
R_{1,2}+R_{2,2}+\rho &< \log_2p^r-H(V_1\oplus V_2|YU_1U_2),
\end{align*}
where $\rho=\frac{H(V_1\oplus V_2|\bar{Q})}{H(V_j|\bar{Q})}\rho_{j,2}, j=1,2$. Next, we substitute $R_j-R_{j,2}$ for $R_{j,1}, j=1,2$ in the above bounds. Finally, we use the Fourier-Motzkin technique \cite{ElGamal-book} to eliminate $R_{j,2}, \rho_{j,1}, \rho_{j,2}, j=1,2$. This completes the proof. 
\end{proof}
\section{Proof of Lemma \ref{lem: R(u_2, Q)< 0.32}}\label{sec: proof of lem R(u_2, Q)< 0.32}
\begin{proof}
Note that for any fixed $u_2\in \mathcal{U}_2$, the distribution of $N$ depends on  the conditional pmf $p(s_1|u_1)$, and the function $x_1=f(s_1,u_1)$.  For any $u\in \mathcal{U}_2$ define  
$$\mathcal{L}_u:=\{f_2(u,s)\oplus s: s\in \ZZ_4\}.$$
For any given $i\in \{1,2,3,4\}$,  define
$$\mathcal{B}_i \triangleq \{u\in \mathcal{U}_2:    |\mathcal{L}_u|=i   \}.$$
Note that $\mathcal{B}_i$'s are disjoint and $\mathcal{U}_2=\bigcup_i \mathcal{B}_i$. Depending on $u_2$, we consider four cases. In what follows, for each case, we derive an upper bound on  $R(u_2, Q_{u_2})$. Consider the pmf $p(\omega)$ on $\ZZ_4$. For brevity, we represent this pmf by the vector $\mathbf{p}:=(p(0), p(1), p(2), p(3))$. 

\subsection*{ Case 1: $u_2\in \mathcal{B}_1$}
Since $|\mathcal{L}_{u_2}|=1$, then for all $s_2\in \ZZ_4$ the equality $s_2\oplus f_2(s_2,u_2)=a$ holds, where $a\in\ZZ_4$ is a constant that only depends on $u_2$. This implies that conditioned on $u_2$, $X_2\oplus S_2$ equals to a constant $a$, with probability one. Therefore,
\begin{align*}
H(X_1\oplus S_1 \oplus X_2 \oplus S_2|u_2 u_1)=H(X_1\oplus S_1\oplus a|u_1u_2)=H(X_1\oplus S_1|u_1)
\end{align*}
Moreover, $$H(S_2|u_2)=H(a \ominus X_2|u_2)=H(X_2|u_2)\leq H(X_2)\leq 1,$$ where the last inequality holds, because of the cost constraint $\EE(w_2(X_2))=0$. As a result, 
\begin{align*}
R(u_2, Q_{u_2}) \leq H(S_1|u_1)-H(X_1\oplus S_1|u_1)-1
\end{align*}
We show in Lemma \ref{lem: inequalities for ptp} that the right-hand side equals $0$.

\subsection*{ Case 2: $u_2\in \mathcal{B}_2$}
For any fixed $u_2\in \mathcal{B}_2$, $f_2(s_2, u_2) \oplus s_2$ takes two values for all $s_2\in \ZZ_4$. Assume these values are $a, b\in \ZZ_4$, where $a\neq b$. Given $u_2$ the random variable  $X_2\oplus S_2$ is distributed over $\{a,b\}$. Therefore, $X_2 \oplus S_2\ominus a$ is distributed over $\{0, b\ominus a\}$, and  
\begin{align*}
H(X_1\oplus S_1 \oplus X_2 \oplus S_2|u_2 u_1)=H(X_1\oplus S_1 \oplus X_2 \oplus S_2\ominus a |u_2 u_1).
\end{align*}
As a result, the case $\{a,b\}$ gives the same bound as $\{0, b\ominus a\}$, and we need to consider only the case in which $a=0$. For the case in which $a=0$, and  $b=3$, consider $X_2 \oplus S_2\oplus 1$. Using a similar argument as above, we can show that when $b=3$, we get the same bound when $b=1$. Therefore, we only need to consider the cases in which $a=0$, and $b\in \{1, 2\}$. We address these cases in the next Lemma.
%
\begin{lem}\label{lem: Case 2}

Let $P(X_2 \oplus S_2 =0 |u_1)=p_0$. The following holds: 

1) If $b=2$, then
\begin{align*}
R(u_2, Q_{u_2}) &\leq \beta(H(S_1|u_1)-H(X_1\oplus S_1 \oplus N_{(2/3,0,1/3,0)}|u_1))\\
&+(1-\beta)(H(S_1|u_1)-H(X_1\oplus S_1\oplus N_{(1/3,0,2/3,0)}|u_1))+H(S_2|u_2)-2
\end{align*}

2) If $b=1$, then 
\begin{align*}
R(u_2, Q_{u_2})& \leq  \beta(H(S_1|u_1)-H(X_1\oplus S_1 \oplus N_{(2/3,1/3,0,0)}|u_1))\\
&+(1-\beta)(H(S_1|u_1)-H(X_1\oplus S_1\oplus N_{(1/3,2/3,0,0)}|u_1))+H(S_2|u_2)-2
\end{align*}
\end{lem}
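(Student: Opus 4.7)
The plan is to exploit the concavity of the Shannon entropy in the distribution of the effective noise $N := X_2 \oplus S_2$ given $u_2$, and to express the actual noise distribution as a convex combination of the two anchor distributions that appear in the lemma. Recall from the preceding derivation that (with $u_2$ fixed in the outer problem)
$$R(u_2, Q_{u_2}) \;=\; H(S_2|u_2) \;+\; \max_{u_1 \in \mathcal{U}_1,\; Q \in \mathcal{P}_{u_1}} \bigl[H(S_1|u_1) - H(Y|u_1)\bigr] - 2,$$
where $Y = X_1 \oplus S_1 \oplus N$. In case (1), after the symmetrization step that sets $\{a,b\} = \{0,2\}$, the conditional law of $N$ is $P_{N|u_2} = (p_0, 0, 1-p_0, 0)$. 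Because $N$ is independent of $(X_1, S_1)$ given $(u_1, u_2)$, the distribution of $Y$ is the $\ZZ_4$-convolution of $P_{X_1 \oplus S_1 | u_1}$ with $P_{N|u_2}$, which is linear in $P_{N|u_2}$; hence $H(Y|u_1)$ is a concave function of $P_{N|u_2}$ for every fixed $(u_1, Q)$.

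Next I would set $P_{N'} := (2/3, 0, 1/3, 0)$, $P_{N''} := (1/3, 0, 2/3, 0)$, and $\beta := 3p_0 - 1 \in [0, 1]$ (in the regime $p_0 \in [1/3, 2/3]$), so that $P_{N|u_2} = \beta P_{N'} + (1-\beta) P_{N''}$. Concavity then gives
$$H(Y|u_1) \;\geq\; \beta\, H\bigl(X_1 \oplus S_1 \oplus N' \mid u_1\bigr) + (1-\beta)\, H\bigl(X_1 \oplus S_1 \oplus N'' \mid u_1\bigr),$$
where the conditioning on the fixed $u_2$ is dropped on the right because $N'$ and $N''$ are fictitious random variables constructed independently of $u_2$. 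Subtracting both sides from $H(S_1|u_1) = \beta H(S_1|u_1) + (1-\beta) H(S_1|u_1)$, evaluating at the $(u_1,Q)$ that attains the max on the left, and substituting into the displayed formula for $R(u_2, Q_{u_2})$ delivers the bound of case (1). Case (2) with $b = 1$ is handled identically after shifting the support of $N$ from $\{0, 2\}$ to $\{0, 1\}$ and replacing the anchor distributions by $(2/3, 1/3, 0, 0)$ and $(1/3, 2/3, 0, 0)$.

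The main obstacle I expect is the boundary regime $p_0 \notin [1/3, 2/3]$, where $\beta = 3p_0 - 1 \notin [0,1]$ and the convex-combination step fails as stated. There I would either re-run the concavity argument with a different extremal pair — for instance, using $(1,0,0,0)$ in place of $P_{N'}$ when $p_0 > 2/3$ — and verify that the resulting numerical bound still matches the $0.32$ target in Lemma \ref{lem: R(u_2, Q)< 0.32}; or argue that the anchors $(2/3, 0, 1/3, 0)$ and $(1/3, 0, 2/3, 0)$ were chosen precisely because they maximize the auxiliary point-to-point Gel'fand--Pinsker capacity for the channel $Y = X_1 \oplus S_1 \oplus N$ under the cost constraint $X_1 \in \{0, 2\}$, so the displayed bound continues to dominate at the boundary. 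A secondary subtlety is that the max over $(u_1, Q)$ does not distribute across the convex combination; this is handled by applying the concavity inequality pointwise in $(u_1,Q)$ before taking the maximum.
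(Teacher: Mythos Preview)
Your proposal is correct and follows essentially the same route as the paper: write the conditional law of $N=X_2\oplus S_2$ as the convex combination $\beta(2/3,0,1/3,0)+(1-\beta)(1/3,0,2/3,0)$ (respectively with support $\{0,1\}$ in part~2), use the bilinearity of $\ZZ_4$-convolution to make the law of $Y$ a convex combination as well, and then invoke concavity of entropy to lower-bound $H(Y\mid u_1,u_2)$ before substituting into $R(u_2,Q_{u_2})$. The paper formalizes the convolution step as a separate remark but otherwise the argument is identical; your observation that concavity must be applied pointwise in $(u_1,Q)$ before the outer maximization is the correct way to handle that subtlety, and the paper does the same implicitly. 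Your boundary worry about $p_0\notin[1/3,2/3]$ is legitimate but is not treated in the paper's own proof either---it simply sets $\beta=3p_0-1$ and proceeds---so on this point you have already gone a step beyond the source.
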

\begin{proof}
The proof is given in Appendix \ref{sec: proof of lemma case 2}.
\end{proof}

Using  Lemma  \ref{lem: inequalities for ptp}, we show that  $$H(S_1|u_1)-H(X_1\oplus S_1 \oplus N_{(2/3,0,1/3,0)}|u_1)\leq 0.1,$$ and 
$H(S_1|u_1)-H(X_1\oplus S_1\oplus N_{(1/3,0,2/3,0)}|u_1) \leq 0.1$. Therefore, if $a=0, b=2$, we have
\begin{align*}
R(u_2, Q_{u_2})&\leq  0.1 + H(S_2|u_2)-2\leq  0.1,
\end{align*}
where the last inequality holds, because $H(S_2|u_2)\leq H(S_2)=2$.

For the case in which $a=0, b=1$, from numerical calculations in Lemma \ref{lem: inequalities for ptp},  we can show that  $$H(S_1|u_1)-H(X_1\oplus S_1 \oplus N_{(2/3,1/3,0,0)}|u_1)\leq 0.5,$$ and $$H(S_1|u_1)-H(X_1\oplus S_1\oplus N_{(1/3,2/3,0,0)}|u_1)\leq 0.5.$$ Therefore, 
\begin{align*}
R(u_2, Q_{u_2}) & \leq   H(S_2|u_2)-1.5
\end{align*}
By an extensive search over all functions in this case that satisfy the cons constrains, we can show that given $u_2$ the random variable $S_2$ can take at most 3 values with positive probabilities. Thus, in this situation  $H(S_2|u_2)\leq \log_2 3$, and 
\begin{align*}
R(u_2, Q_{u_2}) &\leq  \log_2 3 - 1.5 \approx 0.09.
\end{align*}

 \subsection*{ Case 3: $u_2\in \mathcal{B}_3$}
We need only to consider the case when $\mathbf{p}=(p_0, p_1, p_2, 0)$. We have

\begin{lem}
If $u_2\in \mathcal{B}_3$, the following bound holds
\begin{align*}
R(u_2, Q_{u_2}) & \leq \beta_0 (H(S_1|u_1)-H(X_1\oplus S_1 \oplus N_{(2/4,1/4,1/4, 0)}|u_1))\\
&+\beta_1(H(S_1|u_1)-H(X_1\oplus S_1\oplus N_{ (1/4,2/4,1/4,0)}|u_1))\\
&+\beta_2 (H(S_1|u_1)-H(X_1\oplus S_1\oplus N_{ (1/4,1/4,2/4,0)}|u_1))+H(S_2|u_2)-2,
\end{align*}
 where $\beta_i=4p_i-1, ~i=0,1,2$.
\end{lem}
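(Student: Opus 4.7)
The plan is to extend the shift-and-concavity approach used in Case 2 to the three-support setting. By the invariance $H(X_1\oplus S_1\oplus N\mid u_1) = H(X_1\oplus S_1\oplus N\oplus c\mid u_1)$ for any $c\in\ZZ_4$, I may translate the support of $N := X_2\oplus S_2$ so that $\mathcal{L}_{u_2}=\{0,1,2\}$; consequently, conditioned on $u_2$, the variable $N$ has distribution $\mathbf{p}=(p_0,p_1,p_2,0)$ with $\sum_i p_i = 1$. For any admissible $(u_1, Q_{u_1})$ the quantity to be controlled reduces to
\[R(u_2, Q_{u_2}) = H(S_2\mid u_2) + H(S_1\mid u_1) - H(X_1\oplus S_1\oplus N\mid u_1) - 2,\]
so it suffices to lower bound $H(X_1\oplus S_1\oplus N\mid u_1)$ by a convex combination of the three extreme entropies appearing in the lemma.

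The algebraic heart of the argument is a convex decomposition of $\mathbf{p}$. Writing $e_0=(\tfrac12,\tfrac14,\tfrac14,0)$, $e_1=(\tfrac14,\tfrac12,\tfrac14,0)$, $e_2=(\tfrac14,\tfrac14,\tfrac12,0)$ and setting $\beta_i := 4p_i - 1$, a direct coordinate check produces
\[\mathbf{p} = \beta_0 e_0 + \beta_1 e_1 + \beta_2 e_2, \qquad \sum_{i=0}^{2}\beta_i = 4(p_0+p_1+p_2) - 3 = 1.\]
Provided $p_i \ge 1/4$ for every $i\in\{0,1,2\}$, the weights $\beta_i$ are nonnegative and this is a bona fide convex combination of probability distributions on $\ZZ_4$.

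Given this decomposition, I would introduce an auxiliary random variable $I\in\{0,1,2\}$ with $P(I=i)=\beta_i$, independent of $(X_1,S_1)$ given $u_1$, and set the conditional law of $N$ given $I=i$ to be $e_i$; the marginal of $N$ is then $\mathbf{p}$. Since conditioning does not increase entropy,
\[H(X_1\oplus S_1\oplus N\mid u_1) \;\ge\; H(X_1\oplus S_1\oplus N\mid u_1, I) \;=\; \sum_{i=0}^{2}\beta_i\, H(X_1\oplus S_1\oplus N_{e_i}\mid u_1),\]
where $N_{e_i}$ denotes a random variable with distribution $e_i$. Plugging this lower bound into the identity for $R(u_2, Q_{u_2})$ and using $\sum_i\beta_i=1$ to split $H(S_1\mid u_1)$ and the constant $-2 + H(S_2\mid u_2)$ yields the claimed three-term upper bound.

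The hard part, as in the preceding cases, will be the nonnegativity of the coefficients $\beta_i = 4p_i - 1$, i.e., verifying that $p_i \ge 1/4$ for each $i\in\{0,1,2\}$. I expect this to require care: one must exploit the structural condition $|\mathcal{L}_{u_2}|=3$ together with the cost constraint $\EE\{c_2(X_2)\mid u_2\}=0$, which forces $X_2\in\{0,1\}$ with probability one and thereby restricts the admissible functions $s_2\mapsto s_2\oplus f_2(u_2,s_2)$ to a short list of preimage structures. For each such structure one would read off that each $p_i$ is bounded below by $1/4$ under the worst-case admissible law $p(s_2\mid u_2)$; any residual configuration with some $p_i<1/4$ would have $\mathbf{p}$ effectively supported on at most two points and could be absorbed into the Case 2 bound already established. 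Once this nonnegativity step is settled, the concavity argument of the previous paragraphs delivers the inequality.
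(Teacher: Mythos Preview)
Your approach is essentially identical to the paper's: write $\mathbf{p}=(p_0,p_1,p_2,0)$ as the combination $\sum_i \beta_i e_i$ with $\beta_i=4p_i-1$, then invoke concavity of entropy (your conditioning-on-$I$ device is just the mixture reformulation of the same inequality) to lower-bound $H(X_1\oplus S_1\oplus N\mid u_1)$ and rearrange. The paper's proof is the two-line version of what you wrote and, notably, it does \emph{not} verify the nonnegativity of the $\beta_i$ either---it simply asserts the decomposition and appeals to concavity---so the concern you flag about $p_i\ge 1/4$ is a point the paper passes over as well rather than something you are missing.
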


\begin{proof}
Similar to Case 2, we can write $\mathbf{p}$ as a linear combination of three distributions of the form $$\mathbf{p}=\beta_0 (2/4,1/4,1/4, 0)+\beta_1 (1/4,2/4,1/4,0)+\beta_2 (1/4,1/4,2/4,0),$$ where $\beta_i=4p_i-1, ~i=0,1,2$. The proof then follows from the concavity of the entropy.
\end{proof} 
Using Lemma \ref{lem: inequalities for ptp}, we obtain 
 \begin{align*}
R(u_2, Q_{u_2}) &\leq   0.32 +H(S_2|u_2)-2 \leq 0.32
\end{align*}
 
 \subsection*{Case 4: $u_2\in \mathcal{B}_4$}
 In this case, there is a 1-1 correspondence between $x_2(s_2,u_2)\oplus s_2$ and $s_2$. Therefore $H(S_2|u_2)=H(S_2\oplus X_2|u_2)$, and we obtain
\begin{align*}
H(S_2|u_2)-H(X_1\oplus S_1 \oplus X_2 \oplus S_2|u_1)&=H(S_2\oplus X_2|u_2)-H(X_1\oplus S_1 \oplus X_2 \oplus S_2|u_1)\\&\leq 0
\end{align*} 
Therefore 
$H(S_1|u_1)+H(S_2|u_2)-H(Y|u_1u_2)-2\leq  H(S_1|u_1)-2\leq 0.$

Finally, considering all four cases $R(u_2, Q_{u_2}) \leq 0.32$ for all $u_1 \in \mathcal{U}_1$ and $u_2 \in \mathcal{U}_2$. This completes the proof.  
\end{proof}
\section{Lemma \ref{lem: inequalities for ptp}}
\begin{lem}\label{lem: inequalities for ptp}
Suppose $p(\omega)$ is a PMF on $\ZZ_4$. By $N_{\mathbf{p}}$ denote a random variable with distribution $p$ that is independent of $S$. Then for any function $x(s)$, and any PMF $p(s)$ satisfying $\EE\{w_1(X)\}=0$, the following bounds hold:
\begin{align*}
H(S)-H(X\oplus S)&\leq 1\\
H(S)-H(X\oplus S \oplus N_{(1/3, 0, 2/3,0)}|u_1)&\leq 0.1\\
H(S)-H(X\oplus S \oplus N_{(2/3, 0, 1/3,0)}|u_1)&\leq 0.1\\
H(S)-H(X\oplus S \oplus N_{(1/3, 2/3,0,0)}|u_1)&\leq 0.5 \\
H(S)-H(X\oplus S \oplus N_{(2/3, 1/3,0,0)}|u_1)&\leq 0.5\\
H(S)-H(X\oplus S \oplus N_{ (2/4,1/4,1/4,0)}|u_1)&\leq 0.32\\
H(S)-H(X\oplus S \oplus N_{ (1/4,2/4,1/4,0)}|u_1)&\leq 0.32\\
H(S)-H(X\oplus S \oplus N_{ (1/4,1/4,2/4,0)}|u_1)&\leq 0.32
\end{align*}
\end{lem}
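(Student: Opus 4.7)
The strategy is to translate each of the eight claimed inequalities into a finite-dimensional optimization. The cost constraint $\EE\{w_1(X)\}=0$ forces $X\in\{0,2\}$ almost surely, and since $X=x(S)$ is a deterministic function of $S$, this means $x:\ZZ_4\to\{0,2\}$; the only freedom is the assignment $\pi\in\{0,1\}^4$ specifying $x(\cdot)$. Writing $\mathbf{p}_S=(p_0,p_1,p_2,p_3)$ for the PMF of $S$ and $\mathbf{q}$ for the induced PMF of $X\oplus S$, the distribution of $X\oplus S\oplus N$ is the circular convolution $\mathbf{q}\ast\mathbf{n}$ on $\ZZ_4$, so each inequality amounts to bounding the supremum of $H(\mathbf{p}_S)-H(\mathbf{q}\ast\mathbf{n})$ over $(\mathbf{p}_S,\pi)$.

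The first inequality admits a one-line proof via the subgroup $\{0,2\}\subset\ZZ_4$: since $X\in\{0,2\}$, the cosets $\{0,2\}$ and $\{1,3\}$ are preserved under $S\mapsto X\oplus S$, giving $q_0+q_2=p_0+p_2=:\alpha$ and $q_1+q_3=1-\alpha$. By the grouping axiom $H(\mathbf{p}_S)\le H_b(\alpha)+1$, and trivially $H(X\oplus S)\ge H_b(\alpha)$, so the gap is at most $1$. For the two noise kernels supported on $\{0,2\}$ the same coset decoupling applies to $\mathbf{q}\ast\mathbf{n}$, and within each coset the mixture weights are trapped in $[1/3,2/3]$, so $H(\mathbf{q}\ast\mathbf{n})\ge H_b(\alpha)+H_b(1/3)$. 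Hence the gap is at most $1-H_b(1/3)\approx 0.082$, well within the claimed $0.1$.

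For the remaining five kernels the support of $\mathbf{n}$ is not contained in $\{0,2\}$, so cosets mix and the clean decoupling fails. My plan is to exploit the two $\ZZ_4$ symmetries $s\mapsto s\oplus 2$ and $s\mapsto -s$ to reduce the $2^4$ possible assignments $\pi$ to a short list of representatives, and for each representative to reduce $\mathbf{p}_S$ to a handful of free scalars using concavity of $H$ on the simplex. The resulting one- or two-parameter maximizations are continuous on a compact set, so the supremum is attained, and the claimed constants $0.5$ and $0.32$ come out of a standard critical-point-plus-boundary analysis of each reduced problem.

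The main obstacle is precisely this last step: without coset decoupling, the extremizers in $\mathbf{p}_S$ sit at non-trivial interior points with no clean closed form, so the proof must commit to a finite but tedious case analysis over the symmetry-reduced configurations followed by numerical verification of the extremal values. These constants are exactly what power the outer bound $R_1+R_2\le 0.32$ on the Gel'fand--Pinsker region in Lemma~\ref{lem: R(u_2, Q)< 0.32}.
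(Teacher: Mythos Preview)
Your proposal is correct. The paper's own proof of this lemma is a single sentence: it asserts that each bound ``follows by numerically calculating the left-hand side of any bound at any PMF $p$ and any function $x(s)$,'' i.e.\ a brute-force sweep over the finite family of maps $x:\ZZ_4\to\{0,2\}$ and the $3$-simplex of PMFs $\mathbf{p}_S$.

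Your route is strictly more informative. For the first inequality and the two kernels supported on $\{0,2\}$ you exploit the coset decomposition $\ZZ_4=\{0,2\}\cup\{1,3\}$ to obtain genuine closed-form bounds, $1$ and $1-H_b(1/3)\approx 0.082$, without any numerics; the paper offers no such structural argument. For the remaining five kernels you and the paper converge on the same endpoint---numerical verification over a compact domain---with your symmetry reduction via $s\mapsto s\oplus 2$ and $s\mapsto -s$ serving only to shrink the case list, not to change the character of the argument. One small caveat: the objective $H(\mathbf{p}_S)-H(\mathbf{q}\ast\mathbf{n})$ is a \emph{difference} of concave functions of $\mathbf{p}_S$ (since $\mathbf{q}$ is linear in $\mathbf{p}_S$ for fixed $x(\cdot)$), so concavity alone does not pin the extremizers to faces of the simplex; the parameter reduction you allude to must come from the symmetry argument, after which---as you correctly note---the maximum is attained by continuity on a compact set and is located by a critical-point-plus-boundary search.
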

\begin{proof}
The proof follows by numerically calculating the left-hand side of any bound at any PMF $p$ and any function $x(s)$ .  
\end{proof}

\section{Proof of Lemma \ref{lem: Case 2}}\label{sec: proof of lemma case 2}
\begin{proof}
\paragraph*{1)}
 Let $a=0, b=2$, and  $P(X_2 \oplus S_2 =0 |u_1)=p_0$, and $P(X_2 \oplus S_2 =2 |u_1)=1-p_0$. We represent this pmf by the vector $\mathbf{p}=(p_0, 0, 1-p_0, 0)$. This probability distribution is a linear combination of the form 
\begin{align}\label{eq: p_case1}
\mathbf{p}=\beta (2/3, 0, 1/3, 0)+(1-\beta)(1/3,0,2/3,0),
\end{align} 
where $\beta=3p_0-1$.
 \begin{remark} \label{rem: circular conv and bilinear}
Let $Z=X\oplus Y$, where the pmf of $X$ is $\mathbf{p}=(p_0, p_1, p_2,p_3)$, and the pmf of $Y$ is $\mathbf{q}=(q_0, q_1, q_2,q_3)$. If $\mathbf{t}$ is the pmf of $Z$, then $\mathbf{t}=\mathbf{p} \ocoasterisk_4 \mathbf{q} $, where $ \ocoasterisk_4$  is the circular convolution in $\ZZ_4$. In addition, the map $(\mathbf{p} , \mathbf{q})\longmapsto \mathbf{p} \ocoasterisk_4 \mathbf{q} $ is a bi-linear map.
\end{remark}
Let $t_i=p(X_1\oplus S_1 \oplus X_2\oplus S_2=i|u_1u_2)$ and $q_i=p(X_1\oplus S_1 =i |u_1)$ for all $i\in \ZZ_4$. Also denote $\mathbf{q}=(q_0, q_1, q_2,q_3)$, and $\mathbf{t}=(t_0, t_1, t_2, t_3)$. Using  Remark \ref{rem: circular conv and bilinear} and equation (\ref{eq: p_case1}) we obtain 
\begin{align*}
\mathbf{t}&=\beta \big((2/3, 0, 1/3, 0)\ocoasterisk_4 \mathbf{q}\big)+(1-\beta)\big((1/3, 0, 2/3, 0)\ocoasterisk_4 \mathbf{q}\big).
\end{align*}
%
This implies that, $\mathbf{t}$ is also a linear combination of two pmfs. From the concavity of entropy, we get the following lower-bound:
\begin{align*}
H(X_1\oplus S_1 &\oplus X_2\oplus S_2|u_1u_2)=H(\mathbf{t})\\
&=H(\beta \big((2/3, 0, 1/3, 0)\ocoasterisk_4 \mathbf{q}\big)+(1-\beta)\big((1/3, 0, 2/3, 0)\ocoasterisk_4 \mathbf{q}\big))\\
& \geq \beta H((2/3, 0, 1/3, 0)\ocoasterisk_4 \mathbf{q}) +(1-\beta)H((1/3, 0, 2/3, 0)\ocoasterisk_4 \mathbf{q})\\  
& = \beta H(X_1\oplus S_1 \oplus N_{(2/3, 0, 1/3, 0)}| u_1) +(1-\beta)H(X_1\oplus S_1\oplus N_{(1/3, 0, 2/3, 0)} | u_1),  
\end{align*}
where in the last equality, $N_{(\lambda_0, \lambda_1,\lambda_2, \lambda_3)}$ denotes a random variable with pmf $(\lambda_0, \lambda_1,\lambda_2, \lambda_3)$ that is also independent of $u_1$ and $X_1\oplus S_1$. As a result of the above argument, $R(u_2, Q_{u_2}) $ is bounded by
\begin{align*}
R(u_2, Q_{u_2}) & \leq H(S_1|u_1)+H(S_2|u_2) - \beta H(X_1\oplus S_1 \oplus N_{(2/3, 0, 1/3, 0)}| u_1)\\ &-(1-\beta)H(X_1\oplus S_1\oplus N_{(1/3, 0, 2/3, 0)} | u_1)-2\\
&= \beta(H(S_1|u_1)-H(X_1\oplus S_1 \oplus N_{(2/3,0,1/3,0)}|u_1))\\
&+(1-\beta)(H(S_1|u_1)-H(X_1\oplus S_1\oplus N_{(1/3,0,2/3,0)}|u_1))+H(S_2|u_2)-2
\end{align*}

\paragraph{2)}
  Let $a=0, b=2$, and $P(X_2 \oplus S_2 =0 |u_1)=p_0$, and $P(X_2 \oplus S_2 =2 |u_1)=1-p_0$. In this case $\mathbf{p}=(p_0, 1-p_0, 0, 0)$. Also,  $$\mathbf{p}=\beta (2/3, 1/3, 0, 0)+(1-\beta)(1/3,2/3,0,0),$$ where $\beta=3p_0-1$. Similar to case 1), we use Remark \ref{rem: circular conv and bilinear} and the concavity of the entropy to get,
\begin{align*}
R(u_2, Q_{u_2}) & \leq  \beta(H(S_1|u_1)-H(X_1\oplus S_1 \oplus N_{(2/3,1/3,0,0)}|u_1))\\
&+(1-\beta)(H(S_1|u_1)-H(X_1\oplus S_1\oplus N_{(1/3,2/3,0,0)}|u_1))+H(S_2|u_2)-2
\end{align*}
\end{proof}


\begin{thebibliography}{1}


\bibitem{Gelfand-Pinsker}
S. I. Gel'fand and M. S. Pinsker, ``On Gaussian channels with random parameters," in Abstracts of 6th Int. Symp. Information Theory, Tashkent, U.S.S.R., Sep. 1984, pp. 247–250.

\bibitem{Costa}
M. Costa, ``Writing on dirty paper," IEEE Trans. Inf. Theory, vol. IT-29, no. 3, pp. 439–441, May 1983.

\bibitem{Jafar-MAC-state}
S. A. Jafar, ``Capacity with causal and noncausal side information - A unified view," IEEE Trans. Inf. Theory, vol. 52, no. 12, pp. 5468–5475, Dec. 2006.


\bibitem{Philosof-Zamir}
T. Philosof and R. Zamir, ``On the loss of single-letter characterization: The dirty multiple access channel,"
IEEE Trans. on Inf. Theory, vol. 55, pp. 2442-2454, June 2009.

\bibitem{Arun-MAC-with-States}
A.Padakandla and S. S. Pradhan, ``Achievable rate region based on coset codes for multiple access channel with states," 2013 IEEE Int. Symposium on Inf. Theory, Istanbul, 2013, pp. 2641-2645.


\bibitem{korner-marton}
J. Korner and K. Marton, ``How to encode the modulo-two sum of binary sources", IEEE Trans. on Inf. Theory, IT-25:219–221, Mar. 1979.


\bibitem{ISIT-paper-MAC-corr-sources} 
M. Heidari, F. Shirani and S. S. Pradhan, ``New sufficient conditions for Multiple-Access Channel with correlated sources,"  IEEE Int. Symp. on Inf. Theory (ISIT), Barcelona, 2016, pp. 2019-2023.











\bibitem{QLC-ISIT16}
F. Shirani, M. Heidari and S. S. Pradhan, ``Quasi Linear Codes: Application to point-to-point and multi-terminal source coding," 2016 IEEE Int. Symp. on Inf. Theory (ISIT), Barcelona, 2016, pp. 730-734.


\bibitem{Nazer_Gasper_Comp_MAC}
B. Nazer and M. Gastpar, ``Computation over multiple-access channels," IEEE Trans. on Inf. Theory, vol. 53, no. 10, pp. 3498-3516, Oct. 2007.







\bibitem{QGC-ISIT16}
M. Heidari and S. S. Pradhan, ``How to compute modulo prime-power sums," IEEE Int. Symp. on Inf. Theory (ISIT), Barcelona, 2016, pp. 1824-1828.

\bibitem{ElGamal-book}
A. E. Gamal and Y.-H. Kim, Network Information Theory, 1st ed. New York: Cambridge University Press, 2012.

\end{thebibliography}
\end{document}